\theoremstyle{definition} 
\theoremstyle{definition} 
\theoremstyle{plain} 
\theoremstyle{plain} \newtheorem{proposition}{Proposition}
\DeclareMathOperator*{\argmin}{argmin}
\begin{document}

\title{Delay and Power Tradeoff with Consideration of Caching Capabilities in Dense Wireless Networks}
%
\author{Hao~Wu,
        Hancheng~Lu,~\IEEEmembership{Member,~IEEE} 
\IEEEcompsocitemizethanks{\IEEEcompsocthanksitem Hao Wu (Email: hwu2014@mail.ustc.edu.cn) and Hancheng Lu (Email: hclu@ustc.edu.cn) are with the Information Network Lab of EEIS Department, University of Science and Technology of China, Hefei 230027, China .

}
}


\maketitle

\begin{abstract}
Enabling caching capabilities in dense small cell networks (DSCNs) has a direct impact on file delivery delay and power consumption. Most existing work studied these two performance metrics separately in cache-enabled DSCNs. However, file delivery delay and power consumption are coupled with each other and cannot be minimized simultaneously. In this paper, we investigate the optimal tradeoff between these two performance metrics. Firstly, we formulate the joint file delivery delay and power consumption optimization (JDPO) problem where power control, user association and file placement are jointly considered. Then we convert it to a form that can be handled by Generalized Benders Decomposition (GBD). With GBD, we decompose the converted JDPO problem into two smaller problems, i.e., primal problem related to power control and master problem related to user association and file placement. An iterative algorithm is proposed and proved to be $\epsilon$-optimal, in which the primal problem and master problem are solved iteratively to approach the optimal solution. To further reduce the complexity of the master problem, an accelerated algorithm based on semi-definite relaxation is proposed. Finally, the simulation results demonstrate that the proposed algorithm can approach the optimal tradeoff between file delivery delay and power consumption.
\end{abstract}

\begin{IEEEkeywords}
Dense small cell networks (DSCNs), caching, file delivery delay, power consumption, generalized benders decomposition (GBD)
\end{IEEEkeywords}

%
\IEEEpeerreviewmaketitle

\section{Introduction}
To cope with the rapid growth of mobile traffic, network densification with small cell deployment has been proposed as a promising technique, which leverages spatial spectrum reuse to increase network capacity \cite{Udn1}\cite{DSCNReuse}. In dense small cell networks (DSCNs), a large amount of small base stations (SBSs, i.e., Pico BS, Femto BS, et al.) connect to the mobile core network through backhaul. As reported in \cite{CISCOVNI}, video traffic has contributed the major portion of mobile traffic. Such bandwidth-intensive traffic makes backhaul links prone to be bottleneck, leading to a backhaul problem. Existing studies have shown that some files, especially video files, have high popularity and are requested by many users \cite{statistic}\cite{statistic1}. By caching these highly popular files at SBSs, backhaul traffic can be greatly reduced and hence the backhaul problem can be alleviated \cite{CCP}\cite{D-F}.

Enabling caching capabilities at SBSs has a direct impact on the performance of DSCNs, especially in terms of file delivery delay and power consumption\cite{TWC17,TMC18,TOC18,CachePower1,CachePower2}. From the perspective of users, lower file delivery delay means better user experience. If the file requested by a user is cached by its associated SBS, the user can get the file directly from this SBS instead of the remote file server. In this case, the file delivery delay is significantly reduced, which results in improvement in user experience \cite{TWC17,TMC18,TOC18}. From the perspective of mobile network operators, caching files in SBSs incurs additional power consumption. Previous studies have shown that power consumption for caching cannot be ignored at SBSs where power is constrained \cite{CachePower1,CachePower2}. Therefore, in cache-enabled DSCNs, both file delivery delay for users and power consumption at SBSs should be reconsidered. Correspondingly, many research attempts have been made to optimize these two performance metrics in consideration of caching capabilities at SBSs.

To minimize file delivery delay, file placement strategies have been carefully studied in cache-enabled DSCNs. The authors in \cite{TWC17} develop both centralized and distributed transmission aware file placement strategies to minimize delay. In \cite{TMC18}, a new caching  architecture is designed for the cooperative transmissions scenario where the file placement problem is analyzed. Due to the limited SBS cache capacity, a joint file placement and bandwidth allocation schemes with the aim of minimizing the file delivery delay is proposed in \cite{TOC18}. In cache-enabled DSCNs,  file placement policy is often combined with user association to minimize file delivery delay. The problem of file placement and tier-based user association to minimize the file delivery delay is analyzed in \cite{CachingDelay}. To further reduce the complexity of caching placement and user association problem, a distributed algorithm with a low complexity is developed in \cite{D-F}.

There also exist some studies on power consumption at SBSs when caches are involved. In \cite{CachePower1}, by analyzing the relation between  energy efficiency and  cache size, the authors provide the condition when   energy efficiency can benefit from caching. In \cite{GC18}, authors think that caching power and transmit power are both important parts of  total power budget in fiber-wireless access networks. To maximize the downlink throughput under the limited power budget, authors jointly consider power allocation and caching strategies. Caching files at SBSs will incur caching power consumption  while backhaul   power is consumed when files are not cached. Then literature \cite{CachePower2} studies such power consumption tradeoff between  caching and backhaul transmission.

In the aforementioned work, file delivery delay and power consumption,  which are both important performance metrics in cache-enabled DSCNs, have been studied separately. So far very little work has been done to jointly consider these two performance metrics. As file delivery delay and power consumption are coupled with each other, they cannot be minimized simultaneously. Intuitively, to achieve minimum file delivery delay, as much power as possible should be allocated for caching and transmission. It means maximum power should be consumed at SBSs. Hence, there is a tradeoff between these two performance metrics. To achieve the optimal tradeoff, joint optimization of file delivery delay and power consumption should be performed.

However, in cache-enabled DSCNs, the joint file delivery delay and power consumption optimization (JDPO) problem is non-trivial. In traditional DSCNs, the JDPO problem can be solved by jointly power control and user association, whose difficulty largely stems from the coupling relationship caused by inte-cell interference \cite{JDPO1,JDPO2,JDPO3}. With caching capabilities at SBSs, file placement will be an additional flexible variable to the JDPO problem. In this case, file placement should be jointly performed with power control and user association to solve the JDPO problem. Depending on file caching status at SBSs, power consumption for caching should be considered in total power consumption as well as backhaul delay should be considered in file delivery delay. All these make JDPO in cache-enabled DSCNs much more complex than that in traditional DSCNs.

In this paper, we investigate the JDPO problem in cache-enabled DSCNs. To the best of our knowledge, the most similar work to ours is described in \cite{R1}, where the tradeoff between energy consumption and file delivery delay is studied with given file caching status at SBSs. In \cite{R1}, file placement is not jointly performed with power control and user association. Furthermore, power consumption for caching at SBSs is not considered. Based on jointly power control, user association and file placement, we derive the expressions for file delivery delay and power consumption, respectively. Then, we formulate the JDPO problem. To solve the problem, two questions should be answered. The first question is, for each SBS, which power level should be employed for transmitting a requested file. It is related to power control. The second question is, for each user, where to access its requested file, i.e., through which SBS? and then from cache or backhaul of the SBS? It is related to   user association and file placement strategies. Based on these two questions, the JDPO problem can be decomposed into two subproblems and thus its complexity can be reduced.

The main contributions of our work are summarized as follows.
\par
1) We formulate the JDPO problem as a mixed-integer programming (MIP) problem, then decompose it into two subproblems, i.e., transmit power allocation (TPA) problem and file delivery path (FDP) problem. The TPA problem is related to power control at SBSs, while the FDP problem is related to user association and file placement. By relaxing the non-convex TPA problem to a convex one with the tight approximation, we convert the JDPO problem into a form that can be handled by Generalized Benders Decomposition (GBD).
\par
2) With GBD, we decompose the converted JDPO problem into two smaller problems, i.e., primal problem and master problem. The primal problem corresponds to the convex relaxation of the TPC problem, which provides an upper bound of the converted JPDO problem. The master problem corresponds to the FDP problem, which provides a lower bound of the converted JDPO problem.
\par
3) Based on the GBD approach, we propose an iterative algorithm to solve the converted JDPO problem. In each iteration, an upper bound and a lower bound are derived by solving the primal problem and the master problem, respectively. We prove that the proposed iterative algorithm can be converged to an $\epsilon$-optimal solution. To further reduce the complexity of the master problem, we propose an accelerated algorithm based on the semi-definite relaxation (SDR) technique.

Simulations are performed to validate our work. The results show the convergency and optimality of the proposed algorithm. Based on the simulation results, we can conclude that, by jointly power control, user association and file placement, the proposed algorithm can approach the optimal tradeoff between file delivery delay and power consumption.

The rest of this paper is organized as follows. The system model is described in Section II. In Section III, the JDPO problem is formulated, where power control, user association and file placement are jointly considered. Furthermore, the JDPO problem is converted to a form that can be handled by GBD. The converted JDPO problem is decomposed into the primal problem and master problem by GBD in Section IV. In Section V, an iterative algorithm is proposed to approach the optimal solution based on GBD. To reduce the complexity of the master problem, an accelerated algorithm based on SDR is proposed. The simulation results are presented and analyzed in Section VI. Finally, the paper is concluded in Section VII.

%
%
%
%

%


\section{system model}
\begin{figure}[t]
\setlength{\abovecaptionskip}{0.cm}

\setlength{\belowcaptionskip}{-0.cm}
  \centering
  \includegraphics[width=0.63\textwidth]{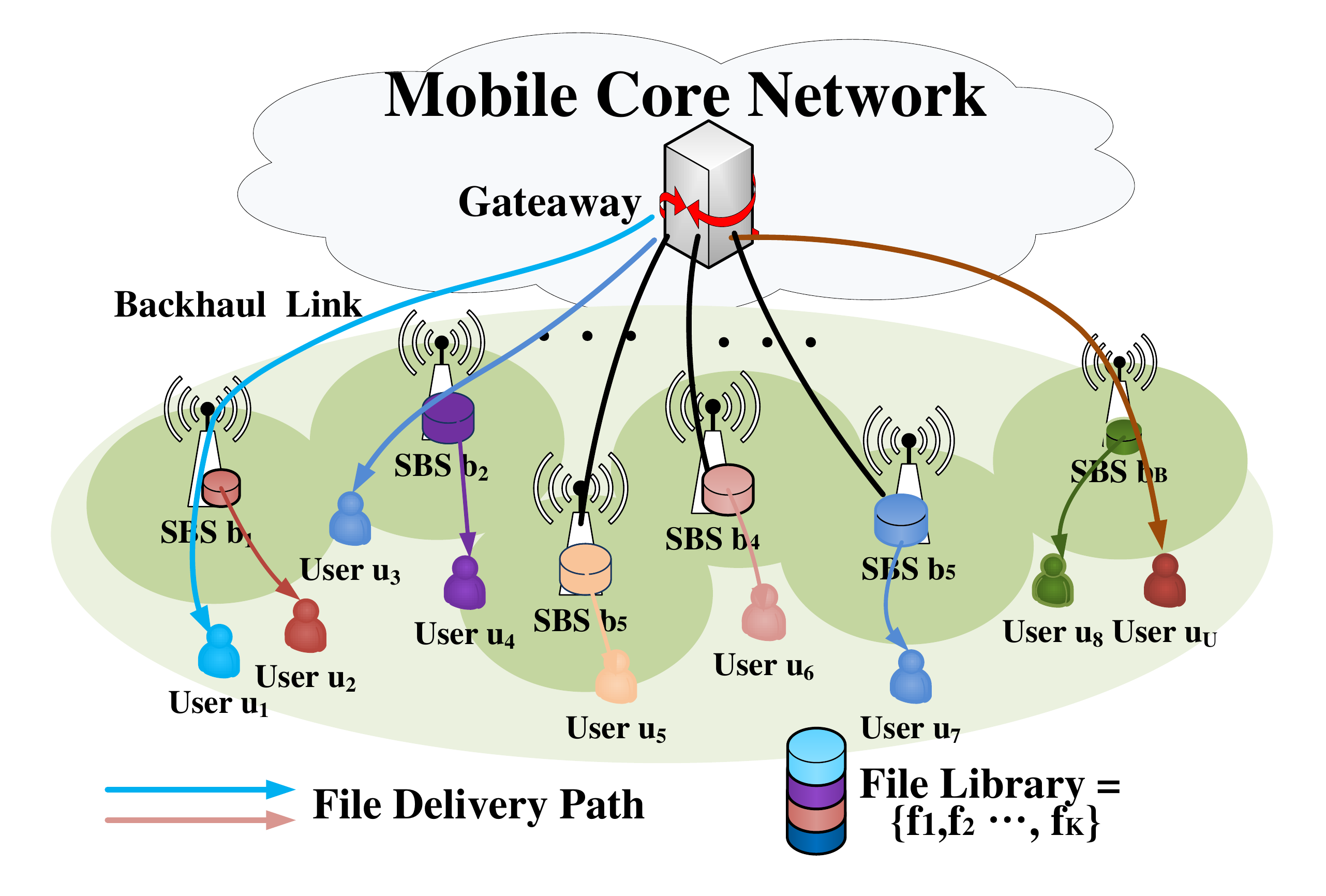}
  \captionsetup{font={small}}
  \caption{Cache-enabled DSCNs architecture.}
  \label{example}
\end{figure}
A downlink DSCN is considered, as shown in Fig. 1 \cite{UDNFig}. In the coverage area of DSCN, there are $B$ small BSs (SBSs ,i.e., femto BSs or pico BSs) indexed by a set $\mathcal{B}$ =\{1, 2, ..., $B$\}.  All SBSs are cache-enabled and the cache capacity of SBS $b_j$ is denoted by $M_j$ (bits) ($j\in \mathcal{B}$). Each SBS is connected to the mobile core network with a capacity-limited backhaul link and the backhaul bandwidth of SBS $b_j$ is $C_j$. $U$ users are randomly deployed. Let $\mathcal{U}$ denote the user index set and $\mathcal{U}$ = \{1, 2, ..., $U$\}. The requested files are indexed by a set $\mathcal{F}$ = \{1, 2, ..., $F$\}, which are stored as a file library at the remote file server. For file $f_k$ ($k\in \mathcal{F}$), its size is denoted by $s_k$(bits). Considering quality of service (QoS) of users, the delivery rate requirement on $f_k$ is denoted by $r_k$(bps). Some major notations are summarized in Table \ref{notation}.

In the cache-enabled DSCN, power consumption and file delivery delay are analyzed and derived as follows.

\renewcommand\arraystretch{0.8}
\begin{table}
  \centering
  \caption{NOTATIONS}
  \label{summary of the notation used in this paper}
  \begin{tabular}{|c|c|}
    \hline
    \multicolumn{2}{|c|}{\textbf{Parameters}}\\
    \hline\textbf{Symbol} & \textbf{Description}\\
    \hline$b_j$, $u_i$, $f_k$ &  SBS, user and file indexed by $j$, $i$, $k$ respectively \\\hline
    $q_{ik}$ &  $u_i$'s file preference for $f_k$\\\hline
    $p_{j}^{ca}$ &  Caching power consumption at $b_j$ \\\hline
    $p_{j}^{cc}$ &  Circuits power consumption at $b_j$ \\\hline
    $p_{j}^{bh}$ &  Backhaul power consumption at $b_j$ \\\hline
    $\sigma_{j}^{ca}$ & Power coefficient of caching hardware in watt/bit\\\hline
    $\sigma_{j}^{bh}$ & Power coefficient of backhaul link in watt/bps\\\hline
    $M_j$ &  Cache capacity of $b_j$ \\\hline
    $C_j$ &  Backhaul bandwidth of $b_j$ \\
    \hline
    \multirow{2}{*}{$\tau_{ijk}^{1},\tau_{jk}^{2}$}  &Wireless transmission delay and Backhaul delay \\
                                                      & of  $f_k$ when  $u_i$ is associated with  $b_j$ \\
    \hline
     $d_{ij}^{k}$ & \makecell{File $f_{k}$ delivery delay for files through from $b_j$ to $u_i$} \\
      \hline
        $p_{ij}^{tr}$ &  Transmit power consumption from $b_j$ to $u_i$\\\hline
    $x_{ij}\in \{0,1\}$ & If $u_i$ is associated with $b_j$, $x_{ij}=1$.\\\hline
    $y_{ik}\in \{0,1\}$ & If $f_k$ is in cache of $b_j$, $y_{ik}=1$.\\
      \hline
   \end{tabular}\label{notation}
\end{table}

\subsection{Power Consumption at SBS}

Considering caching and backhaul power consumptions, total power consumed at SBS $b_{j}$ can be modeled: $p_{j}^{tot} = \rho p_{j}^{tr}+p_{j}^{hc},$
where $p_{j}^{tr}$ denotes transmit power consumed at $b_{j}$\cite{PowerPara1,PowerPara2}. $\rho$ reflects the impact of power amplifier and cooling on transmit power. $p_{j}^{hc}= p_{j}^{ca}+p_{j}^{cc}+p_{j}^{bh}$ is hardware and circuits-related power consumed at $b_j$, including power consumption for caching ($p_{j}^{ca}$), power consumption for operating  baseband and radio  circuits ($p_{j}^{cc}$) and power consumption for backhaul ($p_{j}^{bh}$). Usually, $p_{j}^{cc}$ is fixed. To quantify  power consumption for caching and backhaul, we adopt a power-proportional model\cite{PowerPara1,PowerPara2}. Specifically, $p_{j}^{ca}$ is proportional to the size of cached files at each SBS. Similarly, $p_{j}^{bh}$ is determined by the data rate of the backhaul link.

Let binary variable $y_{jk}$ indicate whether file $f_k$ is cached at $b_j$ or not. When file $f_k$ is cached at $b_j$, $y_{jk}=1$. Otherwise, $y_{jk} = 0$. A user's association policy is denoted by a binary variable $x_{ij}\in\{0, 1\}(\forall i \in \mathcal{U}, j\in \mathcal{B})$. If user $u_i$ is associated with SBS $b_j$, $x_{ij} = 1$. Otherwise, $x_{ij} = 0$. According to the power-proportional model, power consumption for caching at $b_j$ can be expressed as $p_{j}^{ca}= \sigma_{j}^{ca}\sum_{k\in \mathcal{F}} y_{jk}s_k$, where $\sigma_{j}^{ca}$ is the power coefficient of cache hardware in watt/bit and $\sum_{k\in \mathcal{F}} y_{jk}s_k$ is the size of cached files at $b_j$ in bits\cite{CachePower1}. power consumption for backhaul at $b_j$ is $p_{j}^{bh}=\sigma_{j}^{bh}\sum_{i\in \mathcal{U}}x_{ij}r_{ij}^{bh}$, where $\sigma_{j}^{bh}$ is the power coefficient of backhaul link, $r_{ij}^{bh}=\sum_{k\in\mathcal{F}}q_{ik}r_{k}(1-y_{jk})$ is the expected backhaul rate of uncached files for $u_i$ associated with $b_j$, $q_{ik}(k\in \mathcal{F})$ is the preference of $u_i$ for $f_{k}$ and $\sum_{k\in\mathcal{F}} q_{ik}= 1$\cite{CachePower1}. Finally, total power consumption at SBS $b_j$ can be obtained as
\begin{align*}
p^{tot}_{j}&=\rho\sum_{i\in\mathcal{U}}p_{ij}^{tr}+\sigma_{j}^{ca}\sum_{k\in\mathcal{F}}s_{k}y_{ik}+p_{j}^{cc}+\sigma_{j}^{bh}\sum_{i\in \mathcal{U}}x_{ij}r_{ij}^{bh}
\end{align*}

Orthogonal frequency division multiplexing (OFDM) is assumed to be used in the cache- enabled DSCNs. In this case, only inter-cell interference should be considered. For user $u_i$ associated with SBS $b_j$, interference from neighboring cells is denoted by $\sum\limits_{m\in\mathcal{U}}^{m\neq i}\sum\limits_{l\in\mathcal{B}}^{l\neq j} p_{ml}^{tr}g_{il}$. Then, signal-interference-noise-ratio (SINR) at user $u_i$ can be expressed as
\begin{equation*}
  \gamma_{ij} = \frac{p_{ij}^{tr}g_{ij}}{\sum\limits_{m\neq i}^{m \in \mathcal{U}}\sum\limits_{l\neq j}^{l\in\mathcal{B}} p_{ml}^{tr}g_{il}+N_0^2},
\end{equation*}
and the downlink data rate (bit/s) of $u_i$ can be derived as
\begin{equation}\label{DataRate}
  r_{ij}=Wlog_{2}(1+\gamma_{ij}),
\end{equation}
where $p_{ml}^{tr}$ is transmit power of $b_l$ for $u_m$, $g_{il}$ denotes the channel gain between $u_{i}$ and $b_{l}$, $N_0^2$ represents additive white Gaussian noise (AWGN) power, and $W$ indicates bandwidth allocated to each user.

\subsection{File Delivery Delay}
Part of files are cached at SBSs according to the file placement policy.  Let the file placement policy denoted by a binary variable $y_{jk}\in\{0,1\}(\forall j\in \mathcal{B},k\in \mathcal{F})$. Due to the limited cache capacity, $\sum_{k\in\mathcal{F}}y_{jk}s_k\leq M_j$ where $M_j (j\in\mathcal{B})$ is the cache capacity of SBS $b_j$.

In cache-enabled DSCN, file delivery delay consists of wireless transmission delay and backhaul delay. Consider a file delivery case that $u_i$ associated with SBS $b_j$ requests file $f_k$. As $f_k$ is delivered to $u_i$ through $b_j$, wireless transmission delay for $f_k$ can be derived as $\tau_{ijk}^{1}=\frac{s_{k}}{r_{ij}}$. According to \cite{backhauldelay}, delay of a backhaul link can be modeled as an exponentially distributed random variable with a mean value of $D_B$. With caching capabilities at SBSs, backhaul delay considered in file delivery delay depends on the file placement policy. To reflect this fact, backhaul delay for delivering $f_k$ through $b_j$ can be expressed as $\tau_{jk}^{2}= w_j^{bh}(1-y_{jk})$\cite{D-F}, where $w_j^{bh}$ denotes delay of $b_j$'s backhaul link. Then, file delivery delay for $f_k$ can be obtained as
\begin{equation}\label{delay function}
  d_{ij}^{k}=\tau_{ijk}^{1}+\tau_{jk}^{2}=\frac{s_{k}}{r_{ij}}+w_j^{bh}(1-y_{jk})
\end{equation}

Considering the file preference and association policy of $u_i$, we can finally derive average file delivery delay for $u_i$ as follows.
\begin{align*}
   d_{i}&=\sum\limits_{k\in\mathcal{F}}q_{ik}\sum\limits_{j\in\mathcal{B}}x_{ij}d_{ij}^{k}
\end{align*}

\section{Problem Formulation}
In this section, we firstly formulate the JDPO problem where power control, user association and file placement are jointly considered. Then, we decompose the JDPO problem into two subproblems, i.e., TPA problem related to power control and FDP problem related to user association and file placement. By relaxing the non-convex TPA problem to a convex one with the tight approximation, the JDPO problem is converted to a form that can be handled by GBD.

\subsection{Formulation of Joint File Delivery Delay and Power Consumption}
To represent the tradeoff relationship between file delivery delay and power consumption, a weighted sum utility function is used. Then, we can formulate the JDPO problem as a utility maximization problem. This method is widely used in multi-objective problem optimization, for example, in \cite{WeightedSum,WeightedSum1}. By jointly considering power control, user association and file placement, the JDPO problem is formulated as a mixed-integer programming (MIP) expressed as follows.
\begin{align}\label{tradeoff1}
\mathcal{P}1:&\min_{\bm{x},\bm{y},\bm{p}}
~~F(\bm{x},\bm{y},\bm{p})=\theta\delta_{p}\sum_{j\in \mathcal{B}} p^{tot}_{j}+ (1-\theta)\delta_{d}\sum_{i\in \mathcal{U}} d_i \\
\text{s.t.}~
&\sum_{i\in\mathcal{U}}x_{ij}p_{ij}^{tr}\leq P_{j}^{tr,max},~\forall j\in \mathcal{B}\tag{3-a}\label{cons:PowerConstraint},\\
&\sum\limits_{j \in \mathcal{B}}x_{ij}r_{ij}\geq R_i,~\forall i\in \mathcal{U}, \tag{3-b}\label{cons:FileDatarate}\\
&x_{ij}\in \{0,1\},~\forall i\in \mathcal{U},~\forall j\in\mathcal{B},\tag{3-c}\label{cons:Association}\\
&\sum_{j\in \mathcal{B}} x_{ij}= 1,~\forall i\in \mathcal{U},\tag{3-d}\label{cons:UniqueAssociation}\\
&y_{jk}\in\{0,1\},~\forall j\in \mathcal{B},~\forall k\in\mathcal{F},\tag{3-e}\label{cons:FilePlace}\\
&\sum_{k\in F}s_{k}y_{jk}\leq M_j,~\forall j\in\mathcal{B},\tag{3-f}\label{cons:CachingCapacity}\\
&\sum_{i\in\mathcal{U}}x_{ij}r_{ij}^{bh}\leq C_{j},\forall j\in \mathcal{B},\tag{3-g}\label{cons:BackhaulCapacity}
\end{align}
where $\bm{p}_{1\times UB}$, $\bm{x}_{1\times U B}$ and $\bm{y}_{1 \times BF}$ are power control, user association and file placement variable vectors, respectively. A larger balancing factor $\theta$$\in$$[0,1]$ means that less power consumption is preferred, however, at the expense of file delivery delay. $\delta_{p}$ and $\delta_{d}$ are normalization factors ensuring the same range for two objective functions\cite{Normalization}.  Constraint (\ref{cons:PowerConstraint}) requires  total transmit power should not exceed maximum available power at each SBS. Constraint (\ref{cons:FileDatarate}) represents the wireless transmission rate condition for user $u_i$ where $R_i = \sum\limits_{k\in \mathcal{F}}q_{ik}r_{k}$ denotes the average data requirement of  $u_i$. Each user association decision is indicated by a binary variable $x_{ij}$ and each user can at most be associated with one SBS, which are expressed in (\ref{cons:Association}) and (\ref{cons:UniqueAssociation}). In (\ref{cons:FilePlace}), each file placement decision is indicated by a binary variable $y_{jk}$.  The total size of files cached at an SBS can not exceed   maximum cache capacity of that SBS in (\ref{cons:CachingCapacity}).  The total file delivery data rate of a backhaul link should not exceed its maximum backhaul capacity in (\ref{cons:BackhaulCapacity}).

\newcounter{TempEqCnt} 
\setcounter{TempEqCnt}{\value{equation}} 
\setcounter{equation}{3} 
\begin{figure*}[b!]
\hrulefill
\begin{footnotesize}
\begin{align}
&F_{1}(\bm{p})
=\theta\delta_{p}\sum_{j\in \mathcal{B}}\sum_{i\in\mathcal{U}}\rho p_{ij}^{tr}
+(1-\theta)\delta_{d}\sum_{i\in\mathcal{U}}\sum\limits_{j\in \mathcal{B}}x_{ij}\sum\limits_{k\in \mathcal{F}}q_{ik}\frac{s_{k}}{Wlog_{2}{(1+\frac{p_{ij}^{tr}g_{ij}}{\sum\limits_{m\neq i}^{m \in \mathcal{U}}\sum\limits_{l\neq j}^{l\in\mathcal{B}} p_{ml}^{tr}g_{il}+N_0^2})}}\\
&F_{2}(\bm{x},\bm{y})
=\theta\delta_{p}\sum_{j\in\mathcal{B}}[\sigma_{j}^{ca}\sum_{k\in\mathcal{F}}s_{k}y_{jk}+p_{j}^{cc}+
\sigma_{j}^{bh}\sum_{i\in\mathcal{U}}x_{ij}\sum_{k\in\mathcal{F}}q_{ik}r_k(1-y_{jk})]\nonumber
+(1-\theta)\delta_{d}\sum_{i\in \mathcal{U}}\sum_{k\in \mathcal{F}} q_{ik}\sum_{j\in \mathcal{B}}x_{ij}w_{ij}^{bh}(1-y_{jk})\nonumber\\
\end{align}
\end{footnotesize}
\end{figure*}
\setcounter{equation}{\value{TempEqCnt}}

\textbf{\emph{Remark:}} $\mathcal{P}1$ is difficult to be solved directly due to the complex coupling relationship between file delivery delay and power consumption. We attempt to reduce the complexity of $\mathcal{P}1$ by decomposition. Two questions should be answered when solving the JDPO problem. The first question is, for each SBS, which power level should be employed for transmitting a requested file. The second question is, for each user, where to access its requested file, i.e., through which SBS? and then from cache or backhaul of the SBS? Based on these two questions, we decompose the JDPO problem into two subproblems, i.e., TPA problem to reflect the first question and FDP problem to reflect the second question. The TPA problem is related to power control while the FDP is related to user association and file placement. To realize decomposition, we rewrite the objective of $\mathcal{P}1$: $F(\bm{x},\bm{y},\bm{p})=F_{1}(\bm{p})+F_{2}(\bm{x},\bm{y})$, where $F_{1}{(\bm{p})},F_{2}(\bm{x},\bm{y})$ are shown at the bottom of this page. $F_{1}(\bm{p})$ is an objective of the TPA problem, which is related to continuous power control $\bm{p}$. $F_{2}(\bm{x},\bm{y})$ is an objective of the FDP problem, which is related to the binary user association policy $\bm{x}$ and file placement policy $\bm{y}$.

\subsection{Approximation of $F_1{(\bm{p})}$ }
Due to the non-convexity of the wireless transmission delay $\tau_{ijk}^{1}$, $F_1{(\bm{p})}$  in objective function $F(\bm{x},\bm{y},\bm{p})$ is   non-convex. To tackle the non-convexity of $F_1{(\bm{p})}$, an approximation relaxation method is considered.  This approximation is proved to be tight and have low computational complexity \cite{Appro1}\cite{Appro2}. The detailed two steps in the approximation are described as follows.

\setcounter{equation}{5}
\textbf{Step 1:} We will make use of the following lower bound:
      \begin{equation}\label{appro}
        \alpha log\gamma+\beta\leq log(1+\gamma)
\end{equation}
that achieves a tight result at $\gamma=\gamma_{0}$ when the approximation constants are
chosen as
\begin{align*}
\alpha   & = \frac{\gamma_0}{1+\gamma_0},~~~\beta  = log(1+\gamma_0)-\alpha log\gamma_0
\end{align*}

Based on the approximation, we can get:
\begin{align*}
r_{ij}=W(\alpha_{ij}log_{2}{(\frac{p_{ij}^{tr}g_{ij}}{\sum\limits_{m\neq i}^{m \in \mathcal{U}}\sum\limits_{l\neq j}^{l\in\mathcal{B}} p_{ml}^{tr}g_{il}+N_0^2})}+\beta_{ij})
\end{align*}
where the approximation parameters $\alpha_{ij}$ and $\beta_{ij}$ are fixed for each pair $<u_i,b_j> (i\in \mathcal{U},j\in \mathcal{B})$, obtained by the method in \cite{Appro2}.

 \textbf{Step 2:} We intend to use a log form such as $\widetilde{p_{ij}^{tr}}=log~ p_{ij}^{tr}~(~p_{ij}^{tr}=\exp(\widetilde{p_{ij}^{tr}}))$ to replace $p_{ij}^{tr}(i\in \mathcal{U},j\in \mathcal{B})$. Then, we have
\begin{align}\label{ConcaveRate}
\widetilde{r_{ij}}
= W(\alpha_{ij}log_{2}{(\frac{\exp(\widetilde{p_{ij}^{tr}})g_{ij}}{\sum\limits_{m\neq i}^{m \in \mathcal{U}}\sum\limits_{l\neq j}^{l\in\mathcal{B}} \exp(\widetilde{p_{ij}^{tr}})g_{il}+N_0^2})}+\beta_{ij})
\end{align}
which is a concave function over $\widetilde{p_{ij}^{tr}}$ \cite{Appro2}. Thus, according to the convexity rules, wireless transmission delay  $\tau_{ijk}^{1}=\frac{s_{k}}{\widetilde{r}_{ij}}$ becomes convex \cite{Convex}.

After the above two steps, non-convex $F_1(\bm{p})$ is replaced approximatively by a convex $\widetilde{F}_{1}(\widetilde{\bm{p}})$ where $\widetilde{\bm{p}} =(\widetilde{p_{ij}^{tr}},i\in \mathcal{U},j\in \mathcal{B})$. $\widetilde{F}_{1}(\widetilde{\bm{p}})$ is expressed at the bottom of this page.

\newcounter{TempEqCnt1} 
\setcounter{TempEqCnt1}{\value{equation}} 
\setcounter{equation}{7} 
\begin{figure*}[b!]
\hrulefill
\begin{footnotesize}
\begin{align}
\widetilde{F}_{1}(\widetilde{\bm{P}})
&=\theta\delta_{p}\sum_{j\in \mathcal{B}}\sum_{i\in\mathcal{U}}\rho \exp(\widetilde{p_{ij}^{tr}})
+(1-\theta)\delta_{d}\sum_{i\in\mathcal{U}}\sum\limits_{j\in \mathcal{B}}x_{ij}\sum_{k\in \mathcal{F}} q_{ik}\frac{s_{k}}{W[\alpha_{ij}log_{2}{(\frac{\exp(\widetilde{p_{ij}^{tr}})g_{ij}}{\sum\limits_{m\neq i}^{m \in \mathcal{U}}\sum\limits_{l\neq j}^{l\in\mathcal{B}} \exp(\widetilde{p_{ml}^{tr}})g_{il}+N_0^2})}+\beta_{ij}]}\\
=\theta\delta_{p}\sum_{j\in \mathcal{B}}&\sum_{i\in\mathcal{U}}\rho \exp(\widetilde{p_{ij}^{tr}})
+(1-\theta)\delta_{d}\sum_{i\in\mathcal{U}}\sum\limits_{j\in \mathcal{B}}x_{ij}\sum_{k\in \mathcal{F}} \frac{q_{ik}s_{k}}{W[\alpha_{ij}(1.44\widetilde{p_{ij}^{tr}}+log_{2}(g_{ij})-log_{2}(\sum\limits_{m\neq i}^{m \in \mathcal{U}}\sum\limits_{l\neq j}^{l\in\mathcal{B}} \exp(\widetilde{p_{ml}^{tr}})g_{il}+N_0^2))+\beta_{ij}]}\nonumber
\end{align}
\end{footnotesize}
\end{figure*}
\setcounter{equation}{\value{TempEqCnt1}} 

\subsection{Reformulation of Problem $\mathcal{P}1$}
By relaxing the non-convex TPA problem to a convex one with tight approximation, $\mathcal{P}$1 can be reformulated into $\mathcal{P}1'$ with $\widetilde{F}_{1}(\bm{\widetilde{p}})$ and $F_2(\bm{x},\bm{y})$ as follows.
\setcounter{equation}{8}
\begin{align}
\mathcal{P}1':
&\min_{\bm{x},\bm{y},\bm{\widetilde{p}}}~~\widetilde{F}(\bm{x},\bm{y},\bm{\widetilde{p}})=\widetilde{F}_{1}(\bm{\widetilde{p}})+F_{2}(\bm{x},\bm{y})\\
&~~\text{s.t.}~~~~~(\ref{cons:PowerConstraint})\thicksim(\ref{cons:BackhaulCapacity})~\text{with}~\bm{\widetilde{p}}~\text{instead}\nonumber
\end{align}


We can see that $\mathcal{P}1'$ is still an MIP problem. Although the complexity of $F_1(\bm{p})$ is reduced by the convex approximation, $\mathcal{P}1'$ is still  NP-hard  with exponential computation time \cite{NPHard}. In the next section, we find $\mathcal{P}1'$ can be handled by GBD.


\section{Problem Analysis}
In this section, we first analyze the structure of $\mathcal{P}1'$ and confirm that it can be handled by GBD. Then we decompose $\mathcal{P}1'$ into two smaller problem, i.e., \emph{primal problem} related to power control and \emph{master problem} related to user association and file placement. By solving the primal problem and master problem iteratively, a sequence of non-increasing upper bounds as well as no-decreasing lower bounds can be obtained to approach the optimal solution of $\mathcal{P}1'$.

\subsection{GBD Approach and Problem $\mathcal{P}1'$}
GBD is a powerful approach for solving a certain kind of MIP problems [R]. The basic idea of GBD is to decompose the original MIP problem into a \emph{primal problem} and a \emph{master problem}, then solve these two smaller problems iteratively. The primal problem is a convex programming problem and its solution results in a upper bound of the original problem. Then with the solution of the primal problem, the remaining master problem is solved to get a lower bound of the original problem. The GBD approach uses a sequence of non-increasing upper bounds and non-decreasing lower bounds to approach the optimal solution.

In $\mathcal{P}1'$, $\widetilde{F}_{1}(\bm{\widetilde{p}})$ corresponds to the convex approximation of the TPA problem determined by continuous power control $\bm{\widetilde{p}}$, while $F_{2}(\bm{x},\bm{y})$ corresponds to the FDP problem determined by binary user association policy $\bm{x}$ and file placement policy $\bm{y}$. Moreover, constraints (\ref{cons:PowerConstraint}) and (\ref{cons:FileDatarate}) are transmit power $\bm{\widetilde{p}}$-related inequations. And the constraints  (\ref{cons:Association})-(\ref{cons:BackhaulCapacity}) are only related to binary variables $\bm{x}$ and $\bm{y}$. Only the constraint (\ref{cons:PowerConstraint}) contains continue and binary variables $(\widetilde{\bm{p}},\bm{x})$. Due to these separation features in both objective and constraints, the GBD approach can be applied to solve $\mathcal{P}1'$ \cite{GBD}.

\subsection{Primal Problem}
According to the GBD approach, we first fix the binary variables $(\bm{x}^{(t)},\bm{y}^{(t)})$ in $\mathcal{P}1'$ at $t$-th iteration, and we can obtain the primal problem with  only continuous variables $\bm{p}^{(t)}$, namely convex approximation of the \textbf{TPA} problem:
\begin{align}
\mathcal{P}2:\min_{\bm{\widetilde{p}}} &~~\widetilde{F}_1(\bm{\widetilde{p}}) \label{trpower}\\
\text{s.t.}
& 0\leq \exp({\widetilde{p_{ij}^{tr}}})\leq x_{ij}^{(t)}P_{j}^{tr,max},~\forall i\in \mathcal{U},~\forall j\in \mathcal{B},\tag{10-a}\label{cons:P2-1}\\
&0\leq \sum_{i\in\mathcal{U}}\exp({\widetilde{p_{ij}^{tr}}})\leq P_{j}^{tr,max},~\forall j\in \mathcal{B},\tag{10-b}\label{cons:P2-2}\\
& \sum\limits_{i \in \mathcal{U}}\sum\limits_{j \in \mathcal{B}}x_{ij}^{(t)}\widetilde{r_{ij}} \geq \sum\limits_{k\in \mathcal{F}}q_{ik}r_{k},\tag{10-c}\label{cons:P2-3}
\end{align}
where (\ref{cons:PowerConstraint}) is divided equivalently into (\ref{cons:P2-1}) and (\ref{cons:P2-2}) to make the following feasibility discussion simpler. According to Sec.III, $\widetilde{F}_1({\bm{\widetilde{p}}})$ is convex and $\widetilde{r}_{ij}$ of (\ref{cons:P2-3}) is concave.  (\ref{cons:P2-1}) and (\ref{cons:P2-2}) are also convex constraints due to the convexity of exponential function. Therefore, we can conclude that $\mathcal{P}2$ is a convex problem.

\subsection{Feasibility Discussion of $\mathcal{P}2$}
Given variables $(\bm{x}^{(t)},\bm{y}^{(t)})$ will affect the feasibility of primal problem $P$2. Therefore, before presenting the master problem, we will discuss the feasibility of $\mathcal{P}2$.

\textbf{Feasible Case}: For given variables $(\bm{x}^{(t)},\bm{y}^{(t)})$ , if primal problem $\mathcal{P}2$ is feasible, it is easy to obtain the solution $\bm{p}^{(t)}$ . Then, according to GBD, the dual problem of $\mathcal{P}2$ should be analyzed to formulate the master problem. We define the partial Lagrangian function of $\mathcal{P}2$ as
\begin{align}\label{DualPrimalFun}
&\mathcal{L}(\bm{x}^{(t)},\bm{\mu},\bm{\widetilde{p}})\nonumber\\
&= \widetilde{F}_1(\bm{\widetilde{p}})+\sum_{i\in \mathcal{U}}\sum_{j\in \mathcal{B}}\mu_{ij}(\exp{(\widetilde{p_{ij}^{tr}})}-x_{ij}^{(t)}P_{j}^{tr,max}),
\end{align}
where multipliers $\bm{\mu}$ corresponding to constraints (\ref{cons:P2-1}) should satisfy $\mu_{ij}\geq 0,(\forall i\in \mathcal{U}, \forall j\in \mathcal{B}$). The dual problem of $\mathcal{P}2$ is described as follows.
\begin{align}\label{DualPrimalProblem}
&\max_{\bm{\mu}}~\min_{\bm{\widetilde{p}}}\mathcal{L}(\bm{x}^{(t)},\bm{\mu},\bm{\widetilde{p}})\\
&~~\text{s.t.}~~~\mu_{ij}\geq 0,\forall i\in \mathcal{U},~\forall j\in \mathcal{B}\tag{12-a}\nonumber\\
&~~~~~~~~(\ref{cons:P2-1})\thicksim(\ref{cons:P2-3})\nonumber
\end{align}

By solving primal problem $\mathcal{P}2$ and its dual problem (\ref{DualPrimalProblem}), we can get optimal power solution $\bm{p}^{(t)}$ and dual solution $\bm{\mu}^{(t)}$. Then, both $\bm{p}^{(t)}$ and $\bm{\mu}^{(t)}$ will be used as  known conditions passed to the master problem.

\textbf{Infeasible  Case}:
If primal problem $\mathcal{P}2$ is infeasible for given binary variables $(\bm{x}^{(t)},\bm{y}^{(t)})$, we first need to identify the infeasible constraints in (\ref{cons:P2-1}) of $\mathcal{P}2$. Referring to \cite{ConsVio}, in Proposition \ref{Infea}, we introduce a constraint violation problem (V) to locate the infeasible constraints in (\ref{cons:P2-1}).


\begin{proposition}\label{Infea}
First, we focus on the constraints (\ref{cons:P2-1}) and a constraint violation problem (V) is defined as follows.
\begin{align*}
\tag{V}
  &\min\limits_{\eta,\bm{\widetilde{p}}}\quad\eta \\
  &~\text{s.t.}~\exp({\widetilde{p_{ij}^{tr}}})- x_{ij}^{(t)}P_{j}^{tr,max}\leq\eta,~\forall i\in \mathcal{U},~\forall j\in \mathcal{B},\\
  &\quad\quad~~\eta\geq0,\\
  &\quad\quad~~(\ref{cons:P2-2})\thicksim(\ref{cons:P2-3}),
\end{align*}
The dual problem of V is
\begin{align}\tag{V-Dual}
  &\min\limits_{\eta,\bm{\nu},\bm{\widetilde{p}}}\quad\overline{\mathcal{L}}(\bm{\bm{x}^{(t)},\bm{\nu},\widetilde{p}})\\ & =\eta  \nonumber
   +\sum_{i\in \mathcal{U}}\sum_{j\in \mathcal{B}}\nu_{ij}(\exp{(\widetilde{p}_{ij})}-x_{ij}^{(t)}P_{j}^{tr,max}-\eta) \\
  &~\text{s.t.}\quad~~\eta\geq0,\nonumber\\
  &\quad\quad~~(\ref{cons:P2-2})\thicksim(\ref{cons:P2-3}),\nonumber
\end{align}
where $(\bm{\widetilde{p}},\eta)$ and $\bm{\nu}$ are the variables and dual variables for the convex feasible problem $(V^{(t)})$ at $t$-th iteration.
 After solving  the dual problem of V, the optimal dual solutions $\bm{\nu}^{(t)}$ can locate the infeasible constraints in (\ref{cons:P2-1}).
\end{proposition}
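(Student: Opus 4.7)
The plan is to derive the claim by invoking strong duality between $V$ and $V$-Dual and then reading off KKT complementary slackness. The intuitive content is that problem $V$ minimises the worst-case violation of the family (\ref{cons:P2-1}), so by complementary slackness the dual weight $\bm{\nu}^{(t)}$ can only load onto those constraints that actually realise that worst-case violation; when $\mathcal{P}2$ is infeasible the worst-case violation is strictly positive, and the support of $\bm{\nu}^{(t)}$ therefore pinpoints constraints in (\ref{cons:P2-1}) that are violated.

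First I would check the structural hypotheses required to apply strong duality. Problem $V^{(t)}$ is convex: the relaxed inequalities $\exp(\widetilde{p_{ij}^{tr}}) - x_{ij}^{(t)}P_{j}^{tr,max} - \eta \le 0$ are convex because $\exp$ is convex and $-\eta$ is linear, while constraints (\ref{cons:P2-2})--(\ref{cons:P2-3}) were already shown convex in Section III. A Slater point is produced by picking any $\widetilde{\bm{p}}$ feasible for (\ref{cons:P2-2})--(\ref{cons:P2-3}) and choosing $\eta$ strictly larger than $\max_{i,j}\{\exp(\widetilde{p_{ij}^{tr}}) - x_{ij}^{(t)}P_{j}^{tr,max}\}$, which makes every relaxed inequality strict. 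Hence strong duality holds and there exists an optimal primal-dual triple $(\widetilde{\bm{p}}^{(t)},\eta^{(t)},\bm{\nu}^{(t)})$.

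Next I would apply KKT complementary slackness, which gives for every pair $(i,j)$ the equation $\nu_{ij}^{(t)}\bigl(\exp(\widetilde{p}_{ij}^{(t)}) - x_{ij}^{(t)}P_{j}^{tr,max} - \eta^{(t)}\bigr) = 0$. If $\mathcal{P}2$ is infeasible then $\eta^{(t)} > 0$, for otherwise the minimiser of $V^{(t)}$ would already satisfy (\ref{cons:P2-1}). Consequently any $(i,j)$ with $\nu_{ij}^{(t)} > 0$ must satisfy $\exp(\widetilde{p}_{ij}^{(t)}) - x_{ij}^{(t)}P_{j}^{tr,max} = \eta^{(t)} > 0$, so the corresponding constraint of (\ref{cons:P2-1}) is strictly violated. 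Stationarity of the Lagrangian in $\eta$ (combined with the multiplier for $\eta \ge 0$) yields $\sum_{i,j}\nu_{ij}^{(t)} = 1$ in the infeasible regime, guaranteeing that the support of $\bm{\nu}^{(t)}$ is nonempty, i.e.\ at least one infeasible constraint is always detected.

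The main obstacle I anticipate is the non-uniqueness of the dual optimum: several constraints may tie at the worst-case violation $\eta^{(t)}$, so different solvers may distribute $\bm{\nu}^{(t)}$ across any subset of them, and constraints whose violation is strictly less than $\eta^{(t)}$ will not be flagged. For constructing the subsequent GBD feasibility cut it is enough to identify a certificate subset that invalidates the current $(\bm{x}^{(t)},\bm{y}^{(t)})$, so I would phrase the conclusion as ``$\{(i,j):\nu_{ij}^{(t)}>0\}$ is a nonempty set of infeasible constraints in (\ref{cons:P2-1})'' rather than as a full enumeration of violations. Once the Slater point is exhibited, the remainder of the proof is essentially bookkeeping on the KKT system.
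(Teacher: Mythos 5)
Your proposal is correct and rests on the same basic machinery as the paper -- Lagrangian duality applied to the violation problem $(V)$ -- but it goes noticeably further than what the paper actually writes down. The paper's proof only (i) notes convexity of $(V)$, (ii) forms the partial Lagrangian $\overline{\mathcal{L}}$, (iii) sets $\partial\overline{\mathcal{L}}/\partial\eta=0$ to conclude $\sum_{i,j}\nu_{ij}^{(t)}=1$, and (iv) reads off $\widetilde{\bm{p}}^{(t)}=\argmin_{\widetilde{\bm{p}}}\sum_{i,j}\nu_{ij}^{(t)}\exp(\widetilde{p}_{ij})$; it never explicitly argues why a positive $\nu_{ij}^{(t)}$ identifies a violated constraint, leaving the "locate the infeasible constraints" claim implicit. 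You supply exactly the missing pieces: a Slater point for $(V)$ so that strong duality and a primal--dual optimal triple exist, complementary slackness $\nu_{ij}^{(t)}\bigl(\exp(\widetilde{p}_{ij}^{(t)})-x_{ij}^{(t)}P_j^{tr,max}-\eta^{(t)}\bigr)=0$ combined with $\eta^{(t)}>0$ in the infeasible case to show that the support of $\bm{\nu}^{(t)}$ consists of constraints attaining the (positive) worst-case violation, and the correct handling of the multiplier of $\eta\ge 0$, which shows $\sum_{i,j}\nu_{ij}^{(t)}=1$ holds precisely in the infeasible regime -- a point the paper glosses over by differentiating without that multiplier. Your caveat that only constraints tying at the maximal violation are flagged, so the conclusion should be stated as "the support of $\bm{\nu}^{(t)}$ is a nonempty set of violated constraints in (\ref{cons:P2-1})," is an accurate and useful sharpening of the proposition, and it is all that is needed to generate the GBD feasibility cut used in the master problem. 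The only implicit assumption you share with the paper is that (\ref{cons:P2-2})--(\ref{cons:P2-3}) are themselves feasible in $\widetilde{\bm{p}}$, i.e., that infeasibility of $\mathcal{P}2$ is attributable to (\ref{cons:P2-1}); this is exactly the paper's framing, so no gap arises.
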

\begin{proof}
In problem ($V$), as all constraints and the objective $\eta$ are convex ,  ($V$) is a convex problem.
Partial Langragian function $\overline{\mathcal{L}}$ of problem ($V$) can be obtained as
\begin{align}\label{Lv}
  &\overline{\mathcal{L}}(\bm{x}^{(t)},\bm{\nu},\bm{\widetilde{p}})\nonumber\\
  & =\eta
   +\sum_{i\in \mathcal{U}}\sum_{j\in \mathcal{B}}\nu_{ij}(\exp{(\widetilde{p}_{ij})}-x_{ij}^{(t)}P_{j}^{tr,max}-\eta)
\end{align}

By introducing the dual problem and dual variables $\bm{\nu}$, we can see all constraints in (\ref{cons:P2-1}) are coupled with $\bm{\nu}$.
Let $(\bm{\widetilde{p}}^{t},\eta^{(t)})$ and $\bm{\nu}^{(t)}$ be the optimal solution and dual solution, respectively. Then we have
\begin{align}\label{LDualF}
  (\bm{\widetilde{p}}^{t},\eta^{(t)})
  &=\mathop{\argmin}_{\bm{\widetilde{p}},\eta\geq0}~~  \overline{\mathcal{L}}(\bm{x}^{(t)},\bm{\nu}^{(t)},\bm{\widetilde{p}})\nonumber\\
  &=\mathop{\argmin}_{\bm{\widetilde{p}},\eta}~~
  (1-\sum_{i\in \mathcal{U}}\sum_{j\in \mathcal{B}}\nu_{ij})\eta\nonumber\\
  &+\sum_{i\in \mathcal{U}}\sum_{j\in \mathcal{B}}\nu_{ij}(\exp{(\widetilde{p}_{ij})}-x_{ij}^{(t)}P_{j}^{tr,max})
\end{align}
According to the convex dual theorem, the optimality condition is that $\frac{
\partial\overline{\mathcal{L}}(\bm{\widetilde{p}},\eta,\bm{\nu}^{(t)},\bm{x}^{(t)})}{
\partial\eta}=0$. Thus, $\bm{\nu}$ must satisfy: $\sum_{i\in \mathcal{U}}\sum_{j\in \mathcal{B}}\nu_{ij}^{(t)}=1$. From (\ref{LDualF}), we have
\begin{align}\label{pt}
\bm{\widetilde{p}}^{(t)} = \mathop{\argmin}_{\bm{\widetilde{p}}}\sum_{i\in \mathcal{U}}\sum_{j\in \mathcal{B}}\nu_{ij}^{(t)}\exp{(\widetilde{p_{ij}})}
\end{align}
\end{proof}
Optimal solution $\bm{\widetilde{p}}^{(t)}$ and dual solution $\bm{\nu}^{(t)}$ will be used to form the master problem.
\subsection{Master Problem}
The master problem is used to determine the binary variables (i.e., user association and file placement policies), which corresponds to the \textbf{FDP} Problem. Primal problem $\mathcal{P}2$ is first solved and the solutions of $\mathcal{P}2$ (feasible case or infeasible case) will be used to construct the constraints of master problem.

According to the Theorem 2.2 in \cite{GBD}, the master problem can be expressed as follows.
\begin{align}\label{MasterProblem}
\mathcal{P}3:
  \min_{\phi,\bm{x},\bm{y}} &\quad\phi+F_2{(\bm{x},\bm{y})}\\
  \text{s.t.}~~
  &\quad \phi \geq   \mathcal{L}(\bm{x},\bm{\mu}_{fea}^{(t_1)},\bm{\widetilde{p}}_{fea}^{(t_1)}),t_1= 1,2... T_1  \tag{16-a}  \label{cons:Feasible}\\
  &\quad  0   \geq   \overline{\mathcal{L}}(\bm{x},\bm{\nu}_{inf} ^{(t_2)},\bm{\widetilde{p}}_{inf} ^{(t_2)}),t_2 = 1,2... T_2             \tag{16-b}  \label{cons:InFeasible}\\
  &\quad(\text{16-c})\thicksim(\text{16-g})\nonumber
\end{align}
where constraints $(\text{16-c})\thicksim(\text{16-g})$ are the same as  constraints $  (\text{3-c})\thicksim (\text{3-f})$ in problem $\mathcal{P}1$.                        constraints (\ref{cons:Feasible}) and (\ref{cons:InFeasible}) are the optimal cut constraints and feasible cut constraints, respectively, according to GBD. $ \mathcal{L}(\bm{x},\bm{\mu}_{fea}^{(t_1)},\bm{\widetilde{p}}_{fea}^{(t_1)})= \widetilde{F}_1(\bm{\widetilde{p}})+\sum_{i\in \mathcal{U}}\sum_{j\in \mathcal{B}}\mu_{ij}(\exp{(\widetilde{p_{ij}^{tr}})}-x_{ij}^{(t)}P_{j}^{tr,max}) $ and $\overline{\mathcal{L}}(\bm{x},\bm{\nu}_{inf} ^{(t_2)},\bm{\widetilde{p}}_{inf} ^{(t_2)})= \sum_{i\in \mathcal{U}}\sum_{j\in \mathcal{B}}\nu_{ij}(\exp{(\widetilde{p}_{ij})}-x_{ij}^{(t)}P_{j}^{tr,max}) $. $\bm{\widetilde{p}}_{fea}^{(t_1)}$ and $\bm{\mu}_{fea}^{(t_1)}$ are the optimal and dual solutions of problem $\mathcal{P}2$ if $\mathcal{P}2$ is feasible at $t$-th iteration. Besides,  $\bm{\widetilde{p}}_{inf}^{(t_2)}$ and $\bm{\nu}_{inf}^{(t_2)}$ are the optimal and dual solutions of problem $V$ when $\mathcal{P}2$ is infeasible at $t$-th iteration. Index $t_1$ and $t_2$ record the $t_1$-th feasible and $t_2$-th infeasible problem  $\mathcal{P}2$, respectively. $T_1$ and $T_2$ denote the number of the feasible and infeasible problem $\mathcal{P}2$, respectively. Apparently, at $t$-th iteration, $T_1+T_2 = t$.

According to the GBD approach, during each iteration, an upper bound and a lower bound of the problem $\mathcal{P}1'$ can be obtained, which are described in Proposition \ref{bound}.

\begin{proposition}\label{bound}
At each iteration, an upper bound and a lower bound of problem $\mathcal{P}1'$ are obtained. At the $t$-th iteration, lower bound $LB^{(t)}=N^{(t)}$ where $N^{(t)}$ is the optimal objective value of master problem $\mathcal{P}$3 at $t$-th iteration. Upper bound $UB^{(t)}=\min\limits_{0\leq r\leq t} {(M^{(r)}+F_2{(\bm{x}^{(r)},\bm{y}^{(r)})}}$ where $M^{(r)}=\max\limits_{\bm{\mu}}\min\limits_{\bm{\widetilde{p}}}\mathcal{L}(\bm{x}^{(r)},\bm{\mu},\bm{\widetilde{p}})$ is the optimal objective value of $\mathcal{P}2$'s dual problem at $r$-th iteration.
\end{proposition}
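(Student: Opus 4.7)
My plan is to establish the two bounds separately by leveraging the standard GBD duality structure that Geoffroy's theorem encodes, invoking convexity of $\mathcal{P}2$ and the relaxation interpretation of $\mathcal{P}3$.

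First, I would rewrite $\mathcal{P}1'$ as a projection onto the binary variables: $\min_{(\bm{x},\bm{y})\in\mathcal{Y}}\{F_2(\bm{x},\bm{y})+v(\bm{x},\bm{y})\}$, where $\mathcal{Y}$ is the set defined by constraints (3-c)--(3-g) and $v(\bm{x},\bm{y})=\min_{\bm{\widetilde{p}}}\widetilde{F}_1(\bm{\widetilde{p}})$ subject to the $\bm{\widetilde{p}}$-dependent constraints (10-a)--(10-c) with $(\bm{x},\bm{y})$ fixed. For the \emph{upper bound}, I would argue that at any iteration $r$ in which the primal is feasible, the triple $(\bm{x}^{(r)},\bm{y}^{(r)},\bm{\widetilde{p}}^{(r)})$ is feasible for $\mathcal{P}1'$; since $\mathcal{P}2$ is convex with the exponential-domain reformulation and a Slater point exists, strong duality gives $v(\bm{x}^{(r)},\bm{y}^{(r)})=M^{(r)}$, so $M^{(r)}+F_2(\bm{x}^{(r)},\bm{y}^{(r)})$ is the objective of an admissible point of $\mathcal{P}1'$ and hence bounds the optimum from above. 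Taking the running minimum over $0\le r\le t$ yields a non-increasing upper-bound sequence, which is exactly $UB^{(t)}$.

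For the \emph{lower bound}, I would show that $\mathcal{P}3$ is a relaxation of $\mathcal{P}1'$. Using the Lagrangian representation of $v(\bm{x},\bm{y})$, one has $v(\bm{x},\bm{y}) = \sup_{\bm{\mu}\ge0}\min_{\bm{\widetilde{p}}}\mathcal{L}(\bm{x},\bm{\mu},\bm{\widetilde{p}})$, so a valid reformulation of $\mathcal{P}1'$ is $\min_{\phi,\bm{x},\bm{y}}\{\phi+F_2(\bm{x},\bm{y})\}$ subject to $\phi\ge \min_{\bm{\widetilde{p}}}\mathcal{L}(\bm{x},\bm{\mu},\bm{\widetilde{p}})$ for \emph{every} $\bm{\mu}\ge 0$, together with the feasibility condition $0\ge \min_{\bm{\widetilde{p}}}\overline{\mathcal{L}}(\bm{x},\bm{\nu},\bm{\widetilde{p}})$ for every admissible $\bm{\nu}$ (the two families of cuts produced by Proposition \ref{Infea} and the feasible-case duality). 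The master problem $\mathcal{P}3$ retains only the finitely many cuts accumulated through iteration $t$, which is a strict subset of the full family; as such $\mathcal{P}3$ optimizes over a superset of the feasible region of this equivalent reformulation, so $N^{(t)}$ cannot exceed the optimum of $\mathcal{P}1'$, giving $LB^{(t)}=N^{(t)}$. Monotonicity of $LB^{(t)}$ in $t$ follows because each new iteration only appends cuts, tightening the relaxation.

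The main obstacle I anticipate is ensuring the Lagrangian reformulation actually matches $\mathcal{P}1'$ exactly: this is Property~P of GBD and requires (i) convexity of $\widetilde{F}_1$ and of the constraint functions (10-a)--(10-c) in $\bm{\widetilde{p}}$, which is already established in Sec.~III, and (ii) that the projection of the feasible set onto $(\bm{x},\bm{y})$ is characterized precisely by the feasibility cuts coming from problem $V$. I would verify (ii) by noting that $\mathcal{P}2$ is infeasible at $(\bm{x}^{(t)},\bm{y}^{(t)})$ iff the optimal $\eta$ in $V$ is strictly positive, and then invoke the standard result that the supremum of $\overline{\mathcal{L}}$ over $\bm{\nu}\ge 0$ with $\sum\nu_{ij}=1$ detects exactly this infeasibility. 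Once these two points are in place, the bound claims reduce to a direct application of Geoffrion's master-problem decomposition theorem.
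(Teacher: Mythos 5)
Your proposal is correct and takes essentially the same route as the paper: the lower bound is obtained by viewing $\mathcal{P}3$ as a finite-cut relaxation of the projected, Lagrangian-dualized reformulation of $\mathcal{P}1'$ (so $N^{(t)}$ cannot exceed the optimum), and the upper bound rests on strong duality of the convex primal, $M^{(r)}=\widetilde{F}_1(\bm{\widetilde{p}}^{(r)})$, evaluated at iterates $(\bm{x}^{(r)},\bm{y}^{(r)},\bm{\widetilde{p}}^{(r)})$ that are feasible for $\mathcal{P}1'$, with infeasible iterations contributing $M^{(r)}=+\infty$. The only difference is presentational: you argue the upper bound directly from feasibility of the iterate, whereas the paper wraps the same strong-duality identity in a contradiction argument, so the substance matches.
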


\begin{proof}
\textbf{Lower Bound:}
We first consider how to calculate lower bound $LB^{(t)}$ of problem $\mathcal{P}1'$ at the $t$-th iteration. The lower bound of objective value in problem $\mathcal{P}1'$ is mainly obtained by solving master problem $\mathcal{P}$3.

The objective of problem $\mathcal{P}1'$ is $\min\limits_{\bm{x},\bm{y},\bm{\widetilde{p}}}~~\widetilde{F}(\bm{x},\bm{y},\bm{\widetilde{p}})=\widetilde{F}_{1}(\bm{\widetilde{p}})+F_{2}(\bm{x},\bm{y})$.
 The  problem $\min\limits_{\bm{\widetilde{p}}} \widetilde{F}_1(\bm{\widetilde{p}})$ is  convex. Hence, according to the strong duality of the  convex problem, is equivalent to $\max\limits_{\bm{\mu}} \min\limits_{\bm{\widetilde{p}}}\mathcal{L}(\bm{x}^{(t)},\bm{\mu},\bm{\widetilde{p}})$. Therefore, the original problem $P1'$ is equivalent to
\begin{align}\label{BoundLowerProof}
  \min_{\bm{u},\bm{v},\bm{x},\bm{y}} &\quad \min_{\bm{\widetilde{p}}}\mathcal{L}(\bm{x},\bm{\mu} ,\bm{\widetilde{p}})+ F_2 (\bm{x},\bm{y}) \\
    s.t. &\quad  (\ref{cons:Association})\thicksim (\ref{cons:BackhaulCapacity})\nonumber
\end{align}

Comparing problem $\mathcal{P}3$ with problem (\ref{BoundLowerProof}), we can see that problem $\mathcal{P}$3 with constraints (\ref{cons:Feasible}) and (\ref{cons:InFeasible}) is the relaxation of problem (\ref{BoundLowerProof}). Therefore, the solution search space of problem $\mathcal{P}$3 is larger than that in problem (\ref{BoundLowerProof}). which makes $N^{(t)}$ smaller than the optimal objective value of problem (\ref{BoundLowerProof}). According to the duality theory, problem $\mathcal{P}$1 and its dual problem (\ref{BoundLowerProof}) have the same optimal objective value. Hence $N^{(t)}$ is smaller than the optimal objective value of  problem $\mathcal{P}$1.

we can conclude that lower bound  $LB^{(t)}$ of the optimal objective value of problem $\mathcal{P}$1 is $N^{(t)} =\phi^{(t)}+F_2(\bm{x}^{(t)},\bm{y}^{(t)})$ at $t$-th iteration.

\textbf{Upper Bound:}
We prove that the upper bound of $\mathcal{P}1'$ is $UB^{(t)}=\min\limits_{0\leq r\leq t} {(M^{(r)}+F_2{(\bm{\widetilde{x}}^{(r)},\bm{\widetilde{y}}^{(r)})}}$ where $M^{(r)}=\max\limits_{\bm{\mu}}\min\limits_{\bm{\widetilde{p}}}\mathcal{L}(\bm{x}^{(r)},\bm{\mu},\bm{\widetilde{p}}) (0\leq r \leq t)$ is the optimal objective value of dual problem of $\mathcal{P}2$ at $r$-th iteration.

Back to problem $\mathcal{P}2$, as $\bm{x}^{(r)}$ will make problem $\mathcal{P}2$ either infeasible or feasible, the optimal objective value of $\mathcal{P}2$ either infinite or finite. According to the duality theory, the optimal objective value $M^{(r)}$ of dual problem of $\mathcal{P}2$ is larger than that of $\mathcal{P}2$. For the infeasible case,  $M^{(r)}$ is infinite.  It is apparent that $\min\limits_{0\leq r \leq t}\{M^{(r)}+F_2{(\bm{\widetilde{x}}^{(r)},\bm{\widetilde{y}}^{(r)})}\}$ is the upper bound.

For the feasible problem $\mathcal{P}2$, the contradiction method is used to prove the upper bound. We assume that $UB^{(t)} \leq G^{*}$ where $G^{*}$ denotes the optimal objective value of problem $\mathcal{P}1'$.  Among $t$ iterations,  $(M^{(\omega)}+F_2(\bm{\widetilde{x}}^{(\omega)},\bm{y}^{(\omega)})$ is the minimum value (i.e.  $\omega=\arg\min\limits_{0\leq r \leq t}\{M^{(r)}+F_2(\bm{\widetilde{x}}^{(\omega)},\bm{\widetilde{y}}^{(\omega)})$), where $0\leq\omega\leq t$. Let power $\bm{\widetilde{p}}^{(\omega)}$ and $(\bm{\mu}^{(\omega)},\bm{\nu}^{(\omega)})$ be the solutions to problem $\mathcal{P}2$ and  its dual problem at $\omega$-th iteration, respectively. According to the strong duality and the assumption, we have
\begin{align}\label{Uppper}
  UB^{(t)} &=\mathcal{L}(\bm{x}^{(\omega)},\bm{\mu}^{(\omega)},\bm{\widetilde{p}}^{(\omega)})+F_2(\bm{x}^{(\omega)},\bm{y}^{(\omega)})\nonumber\\  & =\widetilde{F}_1(\bm{\widetilde{p}}^{(\omega)})+F_2(\bm{x}^{(\omega)},\bm{y}^{(\omega)})\nonumber\\
  &\leq \widetilde{F}_1(\bm{\widetilde{p}}^{(*)})+F_2(\bm{x}^{(*)},\bm{y}^{(*)})=G^{*}
\end{align}
where $(\bm{\widetilde{p}}^{(*)},\bm{x}^{(*)},\bm{y}^{(*)})$ is the optimal solution to problem $\mathcal{P}1'$. Inequation (\ref{Uppper}) shows that there exist a smaller objective value of problem $\mathcal{P}1'$ given by solution $(\bm{\widetilde{p}}^{(\omega)},\bm{x}^{(\omega)},\bm{y}^{(\omega)})$ than that given by the optimal solution $(\bm{x}^{(*)},\bm{p}^{(*)})$. Such result is contradictory to the initiatory assumption that $(\bm{\widetilde{p}}^{(*)},\bm{x}^{(*)},\bm{y}^{(*)})$ is the optimal solution
to problem $\mathcal{P}1'$. Therefore, $\min\limits_{0\leq r\leq t} {(M^{(r)}+F_2{(\bm{\widetilde{x}}^{(r)},\bm{\widetilde{y}}^{(r)})}}$ is always larger than the optimal objective value of problem $\mathcal{P}$1 .

Hence, the upper bound of problem $\mathcal{P}$1 is $UB^{(t)}=\min\limits_{0\leq r\leq t} {(M^{(r)}+F_2{(\bm{\widetilde{x}}^{(r)},\bm{\widetilde{y}}^{(r)})}}$  .
\end{proof}

\section{Proposed Algorithm}
In this section, based on GBD, we propose an iterative algorithm to obtain an $\epsilon$-optimal objective value of problem $P1'$. To further reduce the computational complexity of the master problem $\mathcal{P}3$, the SDR technique is used and a corresponding accelerated algorithm is proposed.

\subsection{Algorithm Design}
According to the GBD approach, at each iteration, an upper bound and lower bound of the original problem can be obtained. The upper bound is decreasing while the lower bound is increasing after each iteration. When two bounds are closing to each other, the optimal solution and objective value can be approached. To solve $\mathcal{P}1'$, joint power control, user association and file placement (PUF) algorithm is proposed and the procedure is described in Algorithm \ref{alg1}.

First, iteration index $t$ is set to 0 and the initial assignment of $(\bm{x},\bm{y})$ are $(\bm{x}^{(0)},\bm{y}^{(0)})$. Then the primal problem $P$2 is solved with given $\bm{x}=\bm{x}^{(0)}$ where $\bm{x}^{(0)}$ can be randomly selected at first. When problem $P$2 is feasible, the optimal power solution $\bm{\widetilde{p}}^{(t)}$ and its optimal dual solutions $\bm{\mu}^{(t)}$ the new optimal cut $\phi \geq\mathcal{L}(\bm{x},\bm{\mu}^{(t)},\bm{\widetilde{p}}^{(t)})$ can be obtained. If the primal  problem $P$2 is infeasible, problem $V^{(t)}$ is formulated and solved with $\bm{x}=\bm{x}^{(0)}$. Solution $(\bm{\widetilde{p}}^{(t)},\bm{\nu}^{(t)})$ to problem $V^{(t)}$ are used to generate a new feasible cut $0\geq L(\bm{x},\bm{\nu}^{(t)},\bm{\widetilde{p}}^{(t)})$. Either optimal cut or feasible cut is added to the constraint set of master problem $\mathcal{P}$3. With the updated and accumulated constraints, problem $\mathcal{P}$3 is solved to produce solution ($\bm{x},\bm{y}$) for the next iteration.

During each iteration, the optimal value of problem $P$3 gives lower bound $LB^{(t)}$ of problem $\mathcal{P}$1. Such lower bound is non-decreasing with $t$ because at each iteration the feasible region of problem $\mathcal{P}$3 is shrunk with newly added constraint cuts. Besides, upper bound $UB^{(t)}$ of $\mathcal{P}$1 is obtained by solving feasible problem $\mathcal{P}2$. The upper bound is non-increasing with iteration index $t$. The repeated procedure of solving problem $P$2 and problem $P$3 will  terminate until $|UB^{(t)}-LB^{(t)}|\leq \epsilon$ where $\epsilon$ is sufficiently small.

\begin{algorithm}
  \caption{Power control, User association and File placement (PUF) Algorithm for $\mathcal{P}1'$}
   \label{alg1}
   \KwIn{Wireless network parameters $p_{j}^{max},M_j,C_{j}$, balancing and normalization parameters $~\theta,~\delta_{d},~\delta_{p},~\forall i\in\mathcal{U},~\forall j\in \mathcal{B}.$}
   \KwOut{Optimal power control$~\bm{p}^*$, user association$~\bm{x}^*$, file placement $~\bm{y}*$}
   \textbf{Initialization:}~Let$~\bm{x}=\bm{x^{(0)}},~LB^{(0)}=-\infty, ~UB^{(0)}=+\infty,~t=1,t_1=1,t_2=1$\;

   \Repeat{$~|UB^{(t-1)}-~LB^{(t-1)}|\leq \epsilon$}{
   \textbf{Find Solution of Primal Problem \bm{\mathcal{P}}2:}\\
   Solve $\mathcal{P}2$ with Interior Point Method\;
   \eIf{\rm{ $\mathcal{P}2$ is bounded}}{
   solve $\min\limits_{\bm{x}^{(t-1)}}\mathcal{L}(\bm{x},\bm{\mu},\bm{\widetilde{p}})$ to get $(\bm{\widetilde{p}}^{(t)},\bm{\mu}^{(t)})$ and generate the new optimal cut $\phi \geq   \mathcal{L}(\bm{x},\bm{\mu}^{(t)},\bm{\widetilde{p}}^{(t)})$;
   store $(\bm{\mu}^{(t)},\bm{\widetilde{p}}^{(t)})$ into $(\bm{\mu}_{fea}^{(t_1)},\bm{\widetilde{p}}_{fea}^{(t_1)})$;$t_1=t_1+1$;
   }
   {solve problem $V$ to get $(\bm{\widetilde{p}}^{(t)},\bm{\nu}^{(t)})$ and generate the new feasible cut $0 \geq   \mathcal{L}(\bm{x},\bm{\nu}^{(t)},\bm{\widetilde{p}}^{(t)})$;store $(\bm{\nu}^{(t)},\bm{\widetilde{p}}^{(t)})$ into $(\bm{\nu}_{inf}^{(t_2)},\bm{\widetilde{p}}_{inf}^{(t_2)})$;
   $t_2=t_2+1$;}
   Calculate upper bound $UB^{(t)}=\min\limits_{0\leq r\leq t}{(M^{(r)}+F_2{(\bm{\widetilde{x}}^{(r)},\bm{\widetilde{y}}^{(r)})})}$;

   \textbf{Find Solution of Master Problem \bm{\mathcal{P}}3:}\\
    Add a constraint: $\phi \geq   \mathcal{L}(\bm{x},\bm{\mu}^{(t)},\bm{\widetilde{p}}^{(t)})$ or $0 \geq   \mathcal{L}(\bm{x},\bm{\nu}^{(t)},\bm{\widetilde{p}}^{(t)})$ to $\mathcal{P}$3\;
   Solve $\mathcal{P}$3 to obtain $\bm{x}^{(t)}$\ and  $\bm{y}^{(t)}$;
   Calculate lower bound: $LB^{(t)}=\phi^{(t)}+F_2(\bm{x}^{(t)},\bm{y}^{(t)})$\;
   $t=t+1$\;
   }
   Get optimal solution $\bm{p}^{*}$,$\bm{x}^{*}$and $\bm{y}^{*}$.
   \end{algorithm}

Indeed, GBD-based PUF algorithm (Algorithm 1) is convergent after a finite number of iterations and an $\epsilon$-optimal objective value of problem $\mathcal{P}$1 can be obtained.

  \begin{proposition}\label{covergency}\textbf{Convergency Analysis of PUF:}
The PUF algorithm will obtain an $\epsilon$-optimal objective value of problem $\mathcal{P}1'$ after a finite number of iterations.
\end{proposition}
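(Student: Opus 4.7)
The plan is to combine monotonicity of the two bound sequences produced by Algorithm~\ref{alg1} with the finiteness of the binary variable domain, which is the standard recipe for proving finite $\epsilon$-termination of GBD. By Proposition~\ref{bound} both $LB^{(t)}$ and $UB^{(t)}$ are well-defined bounds on the optimum of $\mathcal{P}1'$, so it suffices to show $UB^{(t)}-LB^{(t)}\to 0$ and that this happens after finitely many iterations.

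First I would establish monotonicity. The lower bound $LB^{(t)}=N^{(t)}$ is the optimum of $\mathcal{P}3$, and each iteration appends one extra cut (an optimality cut $\phi\geq \mathcal{L}(\bm{x},\bm{\mu}^{(t)},\bm{\widetilde{p}}^{(t)})$ when $\mathcal{P}2$ is feasible, or a feasibility cut $0\geq \overline{\mathcal{L}}(\bm{x},\bm{\nu}^{(t)},\bm{\widetilde{p}}^{(t)})$ otherwise) without removing any existing constraint. Hence the feasible region of $\mathcal{P}3$ at iteration $t+1$ is contained in that at iteration $t$, giving $N^{(t+1)}\geq N^{(t)}$, i.e.\ $\{LB^{(t)}\}$ is non-decreasing. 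The upper bound $UB^{(t)}=\min_{0\leq r\leq t}\{M^{(r)}+F_2(\bm{x}^{(r)},\bm{y}^{(r)})\}$ is non-increasing by construction as a running minimum.

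Second I would invoke finiteness. Since $(\bm{x},\bm{y})\in\{0,1\}^{UB}\times\{0,1\}^{BF}$, the master problem draws from at most $2^{UB+BF}$ distinct binary iterates. If the algorithm ever revisits a previously generated iterate $(\bm{x}^{(r)},\bm{y}^{(r)})$ at iteration $t$, then the cut generated at iteration $r$ is already present in $\mathcal{P}3$ and must be active at $(\bm{x}^{(t)},\bm{y}^{(t)})=(\bm{x}^{(r)},\bm{y}^{(r)})$. Because $\mathcal{P}2$ is convex (Sec.~IV-B) and satisfies Slater-type regularity whenever feasible, strong duality gives $\min_{\bm{\widetilde{p}}}\mathcal{L}(\bm{x}^{(r)},\bm{\mu}^{(r)},\bm{\widetilde{p}})=\widetilde{F}_1(\bm{\widetilde{p}}^{(r)})$, so the optimal $\phi^{(t)}$ in $\mathcal{P}3$ equals $\widetilde{F}_1(\bm{\widetilde{p}}^{(r)})=M^{(r)}$. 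Then $LB^{(t)}=\phi^{(t)}+F_2(\bm{x}^{(r)},\bm{y}^{(r)})=M^{(r)}+F_2(\bm{x}^{(r)},\bm{y}^{(r)})\geq UB^{(t)}$, and combined with $LB^{(t)}\leq UB^{(t)}$ from Proposition~\ref{bound} this yields $LB^{(t)}=UB^{(t)}$, so the termination test $|UB^{(t)}-LB^{(t)}|\leq\epsilon$ is immediately met. Hence either the algorithm terminates before any repetition or the first repetition closes the gap; in either case termination occurs within at most $2^{UB+BF}+1$ iterations.

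The main obstacle is the revisitation step: proving that an already-generated cut is tight at its originating binary iterate. This hinges on the strong duality of the convexified primal $\mathcal{P}2$ and on the exact correspondence between $M^{(r)}$ and the Lagrangian value that gets encoded in the optimality cut. Once that tightness is in hand, the rest of the argument (monotonicity, finiteness of the binary domain, sandwiching of the optimum by $LB^{(t)}$ and $UB^{(t)}$) is routine, and an $\epsilon$-optimal solution to $\mathcal{P}1'$ follows.
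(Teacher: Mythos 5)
Your proposal is correct and follows essentially the same route as the paper: monotonicity of the bounds ($LB^{(t)}$ non-decreasing because cuts only accumulate, $UB^{(t)}$ non-increasing as a running minimum) combined with the finiteness of the binary set $(\bm{x},\bm{y})$, which is the standard GBD finite-convergence argument. The only difference is that the paper delegates the key step — that revisiting a previously generated binary iterate forces $LB^{(t)} \geq UB^{(t)}$ via strong duality of the convex primal $\mathcal{P}2$, hence $\epsilon$-termination — to Theorem 2.4 of the cited GBD reference, whereas you spell it out explicitly.
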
\begin{proof}
To prove the convergency of PUF algorithm, the proof follows the result of  Theorem 2.4 of GBD \cite{GBD,GBD1,GBD2}.

Actually, due to the finiteness of discrete set $(\bm{x},\bm{y})$ in problem $\mathcal{P}$1, the number of iteration of PUF is finite. According to   $UB^{(t)}=\min\limits_{0\leq r\leq t} {(M^{(r)}+F_2{(\bm{\widetilde{x}}^{(r)},\bm{\widetilde{y}}^{(r)})}}$ of upper bound, upper bound $UB^{(t)}$ of problem $P1'$ is nonincreasing with iteration $t$.  The added constraints makes search space of $(\bm{x},\bm{y})$ in relaxed master problem $\mathcal{P}$3 smaller which implies lower bound $LB^{(t)}$  of  $\mathcal{P}$3 is nondecreasing after each iteration. Hence, the gap of the upper and lower bound is shrunk after each iteration. The PUF algorithm procedure terminates in a finite number of iterations when the gap of the upper and lower bound is less than $\epsilon$. Therefore, the proposed PUF algorithm can converge to a $\epsilon$-optimal objective value of problem $\mathcal{P}1'$.
\end{proof}

\textbf{Complexity Analysis of PUF :}
At each iteration, two subproblems-primal problem $\mathcal{P}2$ and master problem $\mathcal{P}$3 are solved alternately. During the solution procedure of $\mathcal{P}2$, Interior Point Method is used and the computational complexity is $O((UB)^{3})$ \cite{IPM}. However, to solve master problem $\mathcal{P}$3, all the possible binary feasible $(\bm{x},\bm{y})$ in the constraints need to be searched, which incurs an exponential computational complexity (from step (11) to step (13), Algorithm 1). Thus, we propose a fast and efficient SDR-based algorithm to find optimal user association and file placement policies.

\subsection{Accelerated Algorithm for Master Problem  $\mathcal{P}$3} Master problem $\mathcal{P}$3 belongs to a general quadratically constrained quadratic programming (QCQP) problem. The prevailing method to
tackle the general QCQP problem is through semi-definite relaxation (SDR). By SDR, some quantified sub-optimality can be  guaranteed \cite{SDR1,SDR2}. Then an SDR-based algorithm is proposed to solve $\mathcal{P}$3.

\subsubsection{General QCQP Problem and SDP Relaxation}
A general QCQP   problem can be expressed as follows\cite{SDR}.
\begin{align*}
  \min_{\bm{x}} &~~~\bm{x^{T}A_{0}x+b_{0}^{T}x}\\
  s.t.   &~~~\bm{x^{T}A_{i}x+b_{i}^{T}x}\preceq c_{i}, \forall i=1,...M,
\end{align*}
where $\bm{x}$ is a $1\times n$ variable vector. $\bm{b}_{i}$ and $c_{i}~(\forall i=1,...M)$ is a $1\times n$ constant vector. $\bm{A}_{i}~(\forall i=0,...,M)$ is a $n \times n$ coefficient matrix. The SDR technique is widely used to solve the non-convex QCQP problem. The procedure is described as follows.
\begin{align*}
  \min_{\bm{X},\bm{x}}    &~~~Trace\{\bm{A_{0}X}\}+\bm{b}_{0}^{T}\bm{x}\\
  \text{s.t.} ~
    &~~~Trace\{\bm{A_{i}X}\}+\bm{b}_{i}^{T}\bm{x} \preceq c_{i}, \forall i=1,...M,\\
    &~~~\left[ \begin{array}{cc}
             \bm{X} & \bm{x} \\
               \bm{x^{T}}& 1
      \end{array}\right ]\succeq 0,                              \end{align*}
where $T$ designates transposition of vector. The basic idea of SDR is introducing $\bm{X}=\bm{x^{T}x}$ and relaxing the equivalent constraint $\bm{X}=\bm{x^{T}x}$ to a convex one $\bm{X}\succeq \bm{xx^{T}}$. Then the SDR problem can be solved by using the interior point method with the worst case complexity $\mathcal{O}(n^{6})$. By solving the SDR problem, a lower bound of the optimal objective value of the original QCQP problem is obtained \cite{SDR}.

\subsubsection{Accelerated Algorithm}
As the product of $x_{ij}$ and $y_{jk}$ exists in both the objective and constraints, master problem $\mathcal{P}$3 is a QCQP problem. To efficiently obtain the optimal solution of problem $\mathcal{P}$3, we relax $x_{ij}(\forall i\in \mathcal{U}, j\in\mathcal{B})$ and $y_{jk}(\forall  j\in\mathcal{B}, k\in \mathcal{F})$ to continuous variables ranging between [0,1]. Such relaxation means that a file can be delivered to a user who requests it through multiple SBSs. Meanwhile, file placement relaxation means   file placement becomes a probability event.

At $t$-th iteration, problem $\mathcal{P}$3 can be converted to a relaxed problem $\mathcal{P}3'$ with SDR method.
\begin{align}\label{P3SDP}
   \mathcal{P}3':\min_{\bm{z},\bm{Z}} &~ Trace\{\bm{A_{0}\bm{Z}}\}\\
  \text{s.t.}
  &~Trace\{\bm{A_{t_1}\bm{Z}}\}+\bm{b_{t_1}}^{T}\bm{z}\preceq 0, t_1= 1,...T_1, \tag{19-a}\label{cons:RelaxFeasible}\\
  &Trace\{\bm{A_{t_2}'\bm{Z}}\}+\bm{b_{t_2}}'^{T}\bm{z}\preceq 0, t_2= 1,...,T_2, \tag{19-b}\label{cons:RelaxInFeasible}\\
  &~0\leq\bm{r}_l^{T}\bm{z}\leq 1,1\leq l\leq UB,                         \tag{19-c}\\
  &~ \bm{c}_i^{T}\bm{z}=1,\forall i\in \mathcal{U},                  \tag{19-d}\label{cons:UniqueAsso}\\
  &~0\leq\bm{s}_m^{T}\bm{z}\leq 1,1\leq m\leq BF,                         \tag{19-e}\\
  &~ \bm{d}_j^{T}\bm{z}\leq M_j,\forall j\in \mathcal{B},           \tag{19-f}\label{cons:Backhaul}\\
  &~Trace\{\bm{B}_{j}\bm{Z}\}+\bm{b}_j^{T}\bm{z}\leq C_{j},\forall j\in \mathcal{B},   \tag{19-g}\label{cons:Caching}\\
  &~\left[ \begin{array}{cc}
        \bm{Z} & \bm{z} \\
        \bm{z^{T}}& 1
        \end{array}\right ]\succeq 0,                               \tag{19-h}\label{cons:Matrix}
\end{align}
where $\bm{z}_{(UB+BF+1)}$ and $\bm{Z}_{(UB+BF+1)\times(UB+BF+1)}$ are the variable vector and matrix in problem $\mathcal{P}3'$ respectively. All the coefficients  in problem $\mathcal{P}3'$ correspond to those in problem $\mathcal{P}3$. In detail, coefficient matrix $A_{t_1}$ and vector $b_{t_1}^{T}(t_1= 1,2...T_1)$ in  (\ref{cons:RelaxFeasible}) correspond to those in  (\ref{cons:Feasible}) of problem $\mathcal{P}$3.
Coefficient matrix $A_{t_2}'$ and vector $b_{t_2}'^{T}(t_2= 1,2...,T_2)$ in  (\ref{cons:RelaxInFeasible})   correspond to those in (\ref{cons:InFeasible}) of problem $\mathcal{P}$3.
Coefficient vectors $\bm{r}_l^{T},1\leq l\leq UB$ in (19\text{-}c) and $\bm{c_i}~(\forall i \in \mathcal{U})$ in (19\text{-}d) correspond to those in   (16\text{-}c) and (16\text{-}d), respectively.
Coefficient  vectors $\bm{s}_m^{T},1\leq m\leq BF$ in (19\text{-}e)  and $\bm{d_j}~(\forall j \in \mathcal{B})$ in (19\text{-}f) correspond to those in   (16\text{-}e) and (16\text{-}f) in $\mathcal{P}$3. Coefficient  matrix $\bm{B}_{j}$  and vector $\bm{b}_{j}~(\forall j \in \mathcal{B})$ in (19\text{-}g) correspond to those in   (16\text{-}g).
Hence, according to SDR definition, problem $\mathcal{P}3'$ becomes a convex problem that can be efficiently solved by the interior point method.

\begin{algorithm}
  \caption{SDR-based Method to Find Solution of Master Problem $\mathcal{P}3$}
   \label{alg2}
   \KwIn{Constant coefficient matrixes and vectors in Relaxed Master Problem $\mathcal{P}3'$: $~\bm{A_0},\bm{b_0},\bm{A_{t_1}},\bm{A_{t_2}}',\bm{b_{t_1}},\bm{b_{t_2}}',\bm{r_l},\bm{c_i},\bm{s_m},\bm{d_j},\bm{B_{j}},$ $\bm{b}_j,M_j,C_{j},~\forall l=1,...,UB,~\forall i=1,...U;$
   $~\forall m=1,...,BF,~\forall j=1,...B.$}
   \KwOut{$~\bm{z}^{*}$=(optimal user association $\bm{x}^{*}$,file placement $\bm{y}^{*},\bm{\phi}^{*}$).}
   Solve relaxed master problem $\mathcal{P}3'$ (\ref{P3SDP}) by interior point method;\\
   Use drawing random points method to generate approximate solutions to the original  master problem $\mathcal{P}3$ (\ref{MasterProblem})\cite{SDR1,SDR2};\\
   Get optimal solution $\bm{z}^{*}=(\bm{x}^{*},\bm{y}^{*},\bm{\phi}^{*})$ where optimal user association and file placement are obtained.\\
\end{algorithm}

An SDR-based accelerated algorithm is proposed to replace the solution to master problem $\mathcal{P}$3 in Algorithm 1 (from step 11 to step 13), which is described in Algorithm 2. The complexity of Algorithm 2 is $\mathcal{O}((UB+BF)^{6})$ where $U$, $B$ and $F$ are the total number of users, SBSs and files respectively \cite{SDR}. Here, we define the PUF algorithm with the proposed accelerated algorithm as \textbf{Accelerated PUF(A-PUF)}.

\section{NUMERICAL SIMULATIONS}
In this section, simulations are performed to validate our work. Firstly, the convergency and optimality of the proposed algorithm are verified. Then, the performance of the proposed algorithm is evaluated in terms of total power consumption and file delivery delay under different DSCN scenarios, compared with existing policies. The results demonstrate the advantages of joint power control, user association and file placement in the proposed algorithm.
\renewcommand\arraystretch{0.7}
\begin{table}[h]
\centering
\caption{Simulation parameters}
\label{parameters}
 \begin{tabular}{ccc}
 \hline
 \hline
 Parameters                             &  SBS\\
 \hline
 System Bandwidth                       &20 MHz\\
 Subchannel bandwidth                   &200kHz\\
 Pathloss                               &$140+36.7log_{10}d$\\
 Shadowing Deviation                            &4 dB\\
 Noise Power Density                            &-174 dBm/Hz\\
 Number of BSs                                  &50\\
 Circuit power at each SBS                      &5.1W\\
 Maximum Transmit Power                         &30 dBm\\
 Cache size of each SBS                         &30GB\\
 Backhaul bandwidth capacity                    &1Gbps\\
 Caching power coefficient              &$6\times10^{-12}$W/bit\\
 Backhaul power coefficient             &$4\times10^{-8}$W/pbs\\
 \hline
 \hline
 \end{tabular}
\end{table}

\subsection{Simulation Setup}
In each simulation, DSCN is made up of  over $50$ SBSs, serving 250 users.  The coverage area of DSCN is a square area of 250$\times$250~$m^{2}$. Locations of SBSs and users follow  Poisson point process (PPP) model. A co-channel DSCN is considered, where  channel gain between a user and a SBS includes path loss, shadow fading and antenna gain. The backhaul delay $D_B$ is  between 0.5 and 3s\cite{D-F}. The cache capacity is set to 30GB and the backhaul capacity is set to 1Gbps. There are $1000$ files in file library $\mathcal{F}$. Each file size is set to 10$\sim$300MB and its requirement on delivery rate is set to 0.5$\sim$2Mbps. Each user has its own file preference. The user preference for files follows the kernel function \cite{UserPreference}. For each simulation result, file delivery delay is averaged over all users, i.e., $\frac{1}{U}\sum_{i\in U}d_i$. Similar to \cite{WeightedSum,WeightedSum1}, the normalized factors $\delta_{p}$ and $\delta_{d}$ are 0.002 and 0.2. Other default simulation configurations are listed in Table II, based on 3GPP specification  \cite{3GPP:Spec}.

The proposed algorithm is compared with two typical existing policies described as follows.
\begin{itemize}
  \item Content-Centric Policy(CCP) \cite{CCP}: This strategy aims to optimize transmit power with file placement and power control. It is assumed that file preference is uniform among users and each SBS caches the most popular contents until its cache is full. Power control is performed to minimize total transmit power consumption with the fixed user association.
  \item Delay-first Policy(D-F) \cite{D-F}: This strategy focuses on minimizing file delivery delay for users including wireless transmission delay and backhaul delay by jointly performing file placement and user association.
\end{itemize}

\subsection{Convergency of A-PUF}
\begin{figure}[htbp]
\setlength{\abovecaptionskip}{0.cm}

\setlength{\belowcaptionskip}{-0.cm}

\centering
\includegraphics[width=0.5\textwidth]{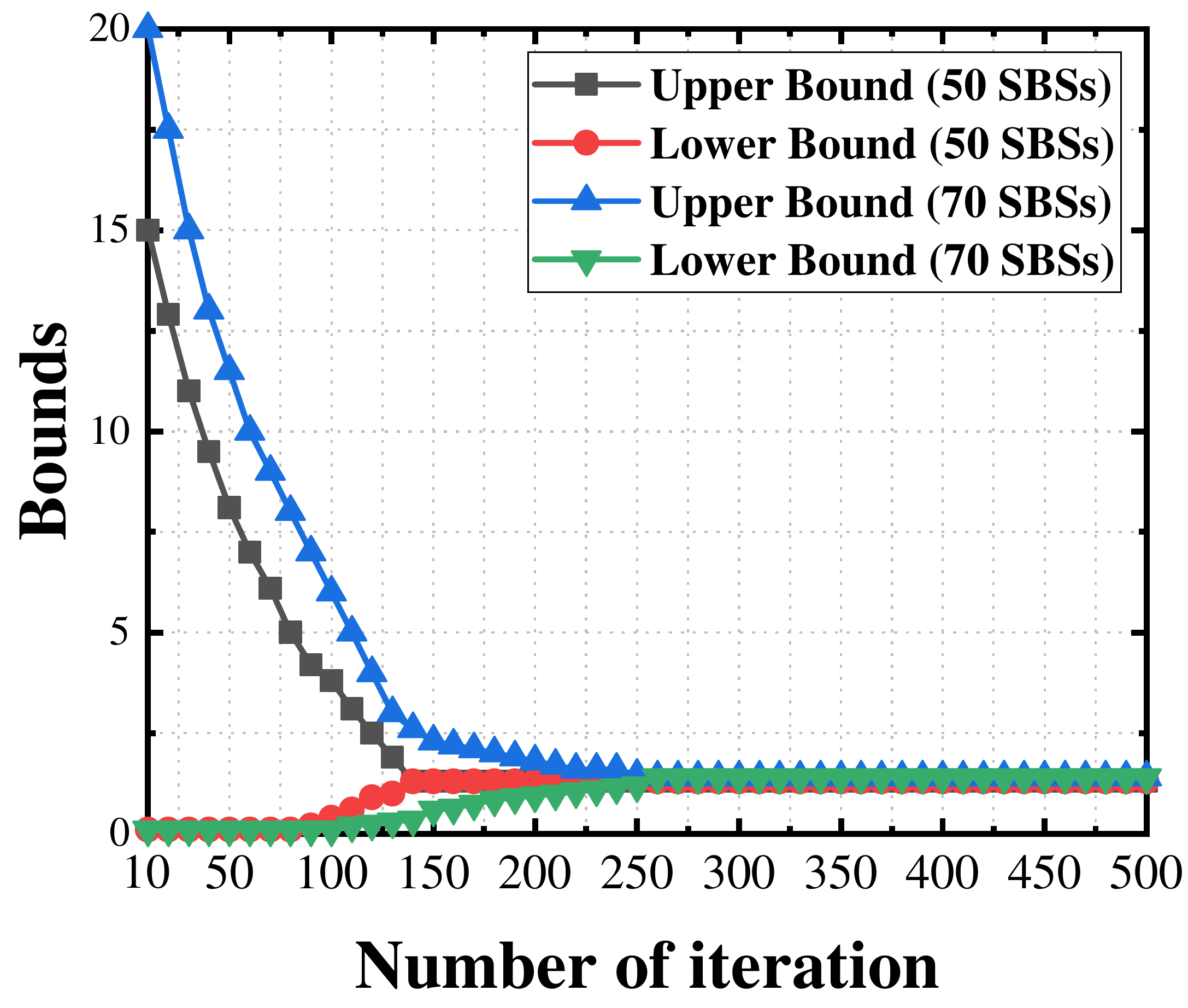}
\captionsetup{font={small}}
\caption{Upper and lower bounds when performing A-PUF}
\label{iteration}
\end{figure}
Fig. \ref{iteration} verifies the convergency of the proposed A-PUF algorithm.  The iteration number begins with 10. The cache capacity of each SBS follows a normal distribution with mean value 5000MB and the number of users is set to 250.  Balancing factor $\theta$ is set to 0.5. As expected, we can see that the upper bound and lower bound become closer with the increasing number of iterations for A-PUF. A-PUF converges to an $\epsilon$-optimal result (i.e.,$\epsilon=0.005$) after 130 iterations with 50 SBSs (800 SBSs/km$ ^{2}$) and 225 iterations with 70 SBSs (1120 BSs/km$^{2}$).

The fast convergence of the proposed A-PUF algorithm is achieved by inserting valid cuts and applying the SDR technique. As many optimal and feasible cuts as possible generated in early iterations (about 50$\thicksim$100) can largely shrink the gap between the lower and upper bounds.

\subsection{Optimality of A-PUF}
\begin{figure}[ht]
\setlength{\abovecaptionskip}{0.cm}

\setlength{\belowcaptionskip}{-0.cm}

\centering                                            
\subfigure[]{\includegraphics[width=7.46cm]{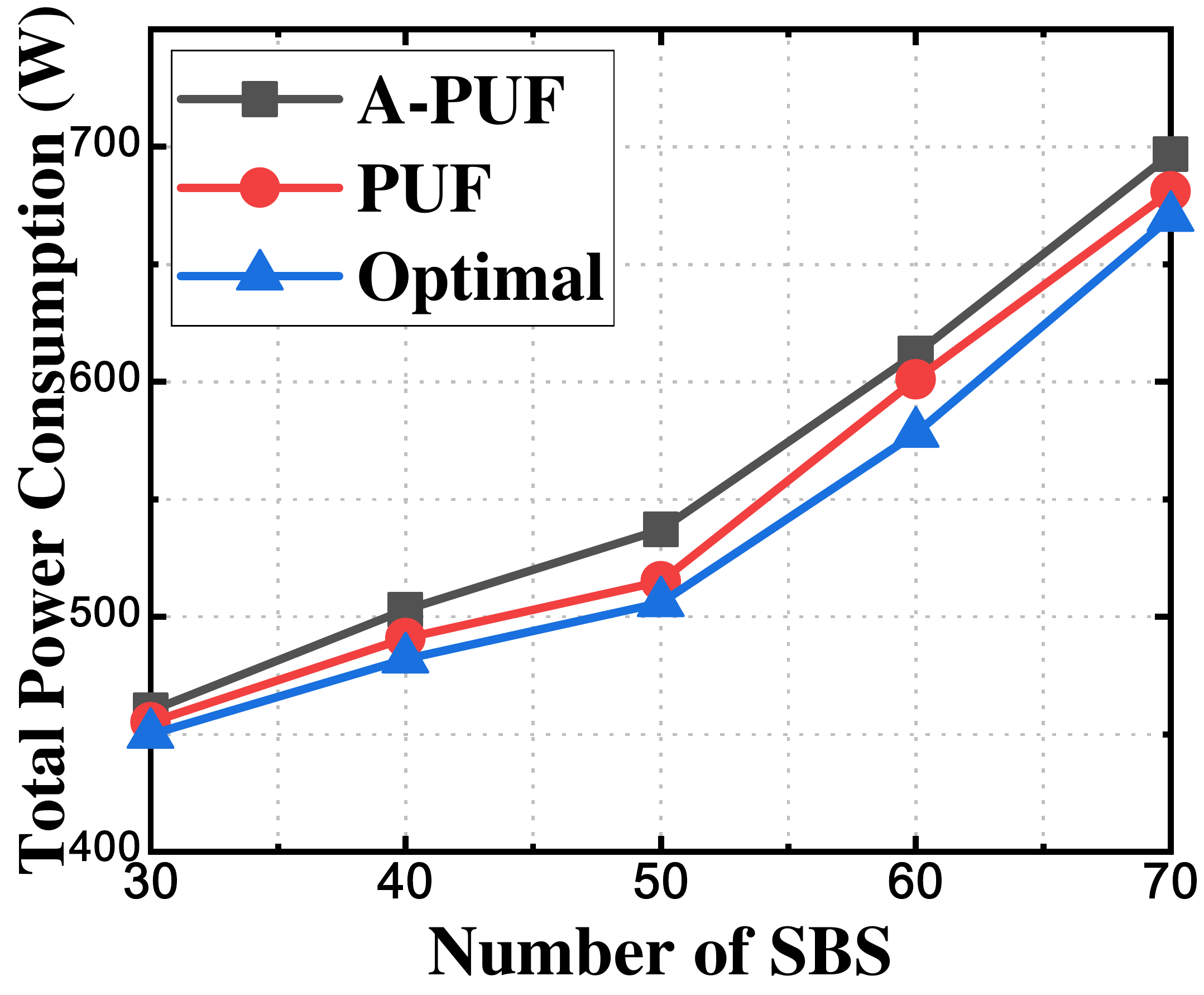}\label{EffiPower}}
\subfigure[]{\includegraphics[width=7.46cm]{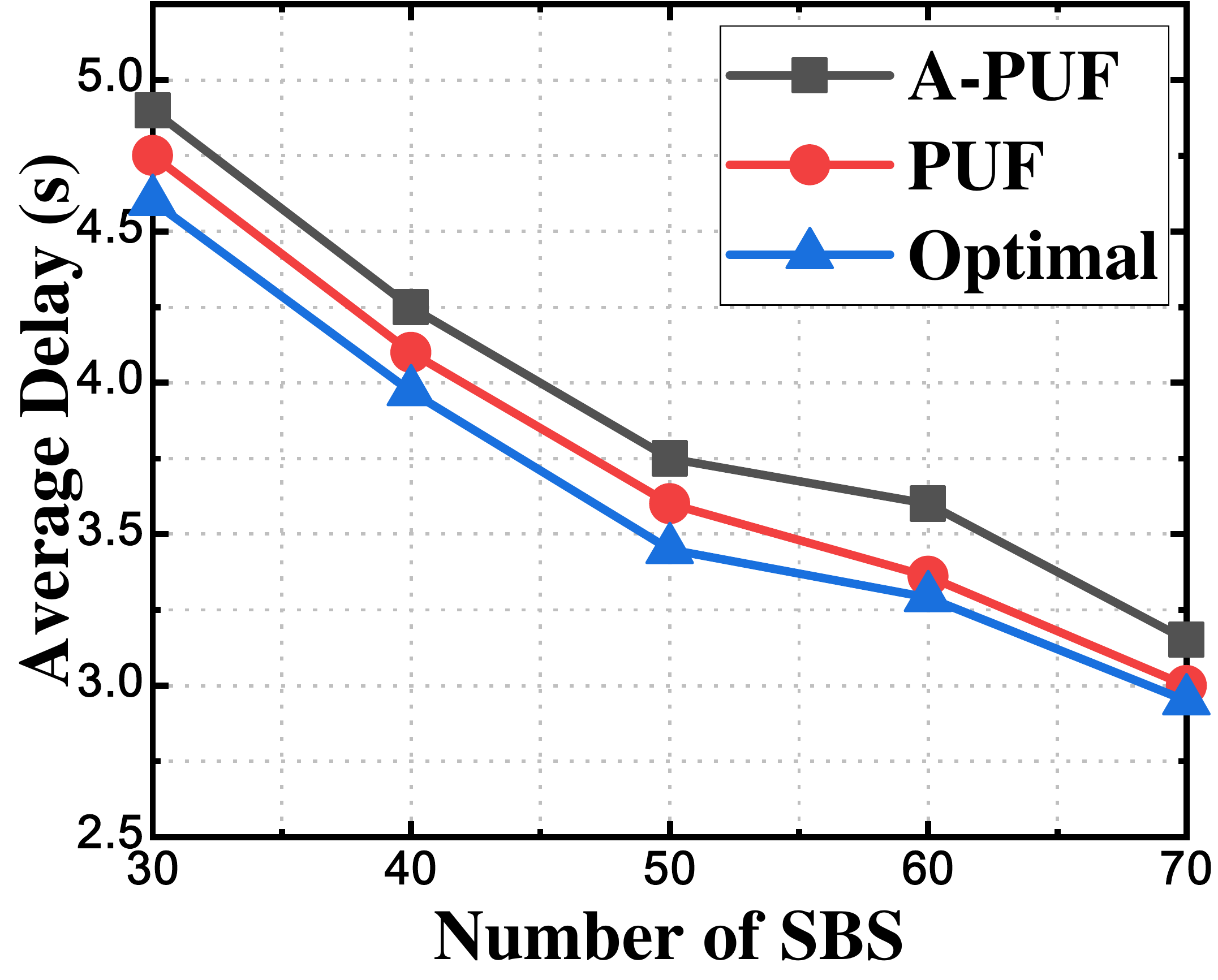}\label{EffiDelay}}                                        
\captionsetup{font={small}}
\caption{Optimality of A-PUF and PUF with different number of SBSs}
\label{Effici}
\end{figure}

To investigate the optimality of PUF and A-PUF, we use an exhaustive search algorithm to obtain optimal file delivery delay and total power consumption. Balancing factor $\theta$ is  0.5. Different DSCN sizes are considered
by varying the number of SBSs. As shown in Fig. \ref{EffiPower} and Fig. \ref{EffiDelay}, both PUF and A-PUF approach the optimal performance as the number of SBSs is varied.

For PUF, the performance loss is caused by convex approximation described in Section III. In order to obtain a convex form of $F_1{(\bm{p})}$, the downlink user data rate is relaxed based on the lower bound expressed as (\ref{appro}). This approximation results in more transmit power and larger file delivery delay compared with the optimal ones. For A-PUF, the introduction of SDR not only accelerates the convergence of the algorithm, but also incurs additional performance loss. However, compared with the significant improvement on the convergence of the algorithm, such slight performance loss is negligible.

\subsection{Performance Under Different SBS Densities}

\begin{figure}[ht]
\setlength{\abovecaptionskip}{0.cm}

\setlength{\belowcaptionskip}{-0.cm}

\centering                                            
\subfigure[]{\includegraphics[width=7.64cm]{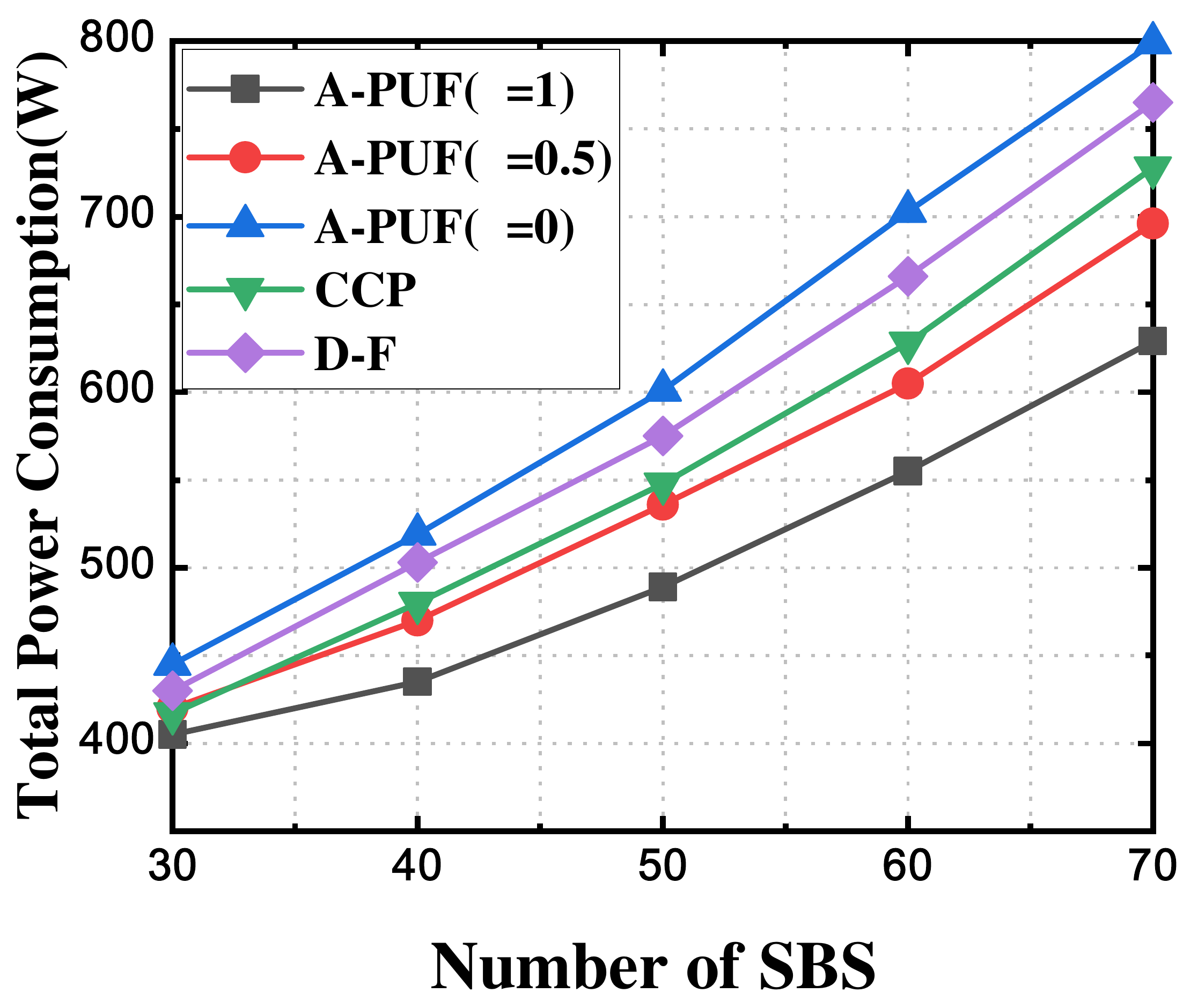}\label{PowerBS}}
\subfigure[]{\includegraphics[width=7.44cm]{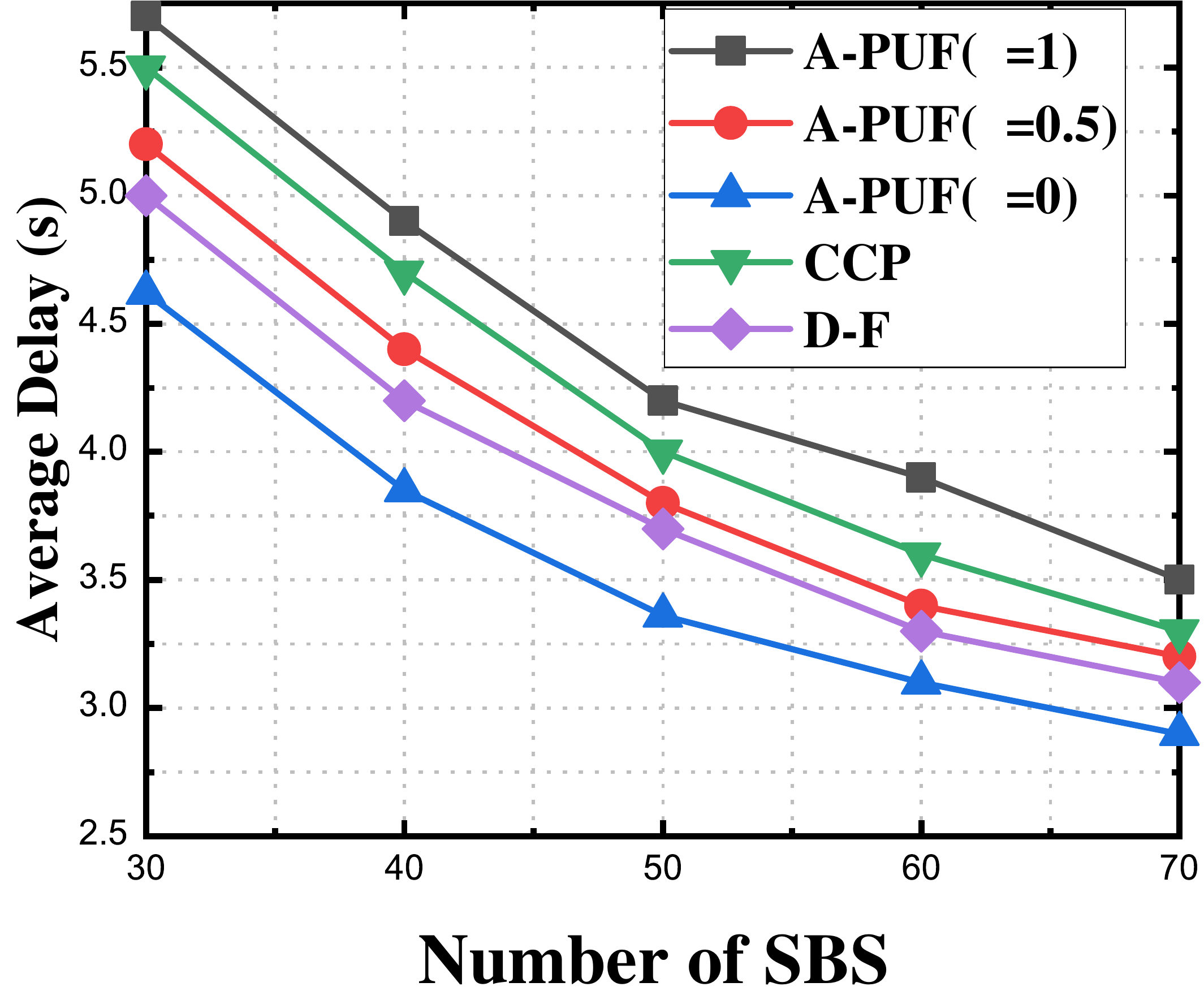}\label{DelayBS}}                                        

\subfigure[]{\includegraphics[width=7.7cm]{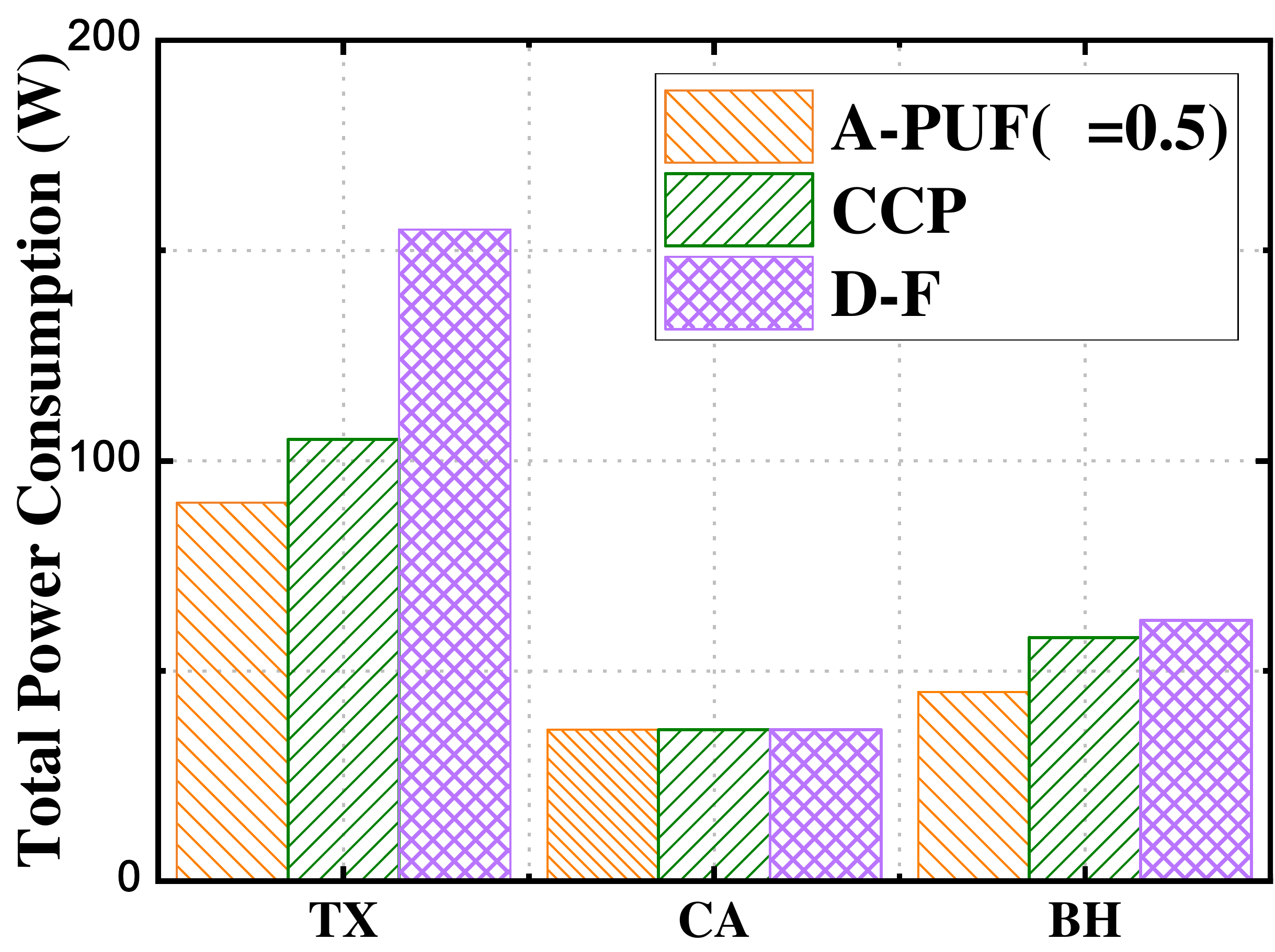}\label{Three_Power}}
\subfigure[]{\includegraphics[width=7.0cm]{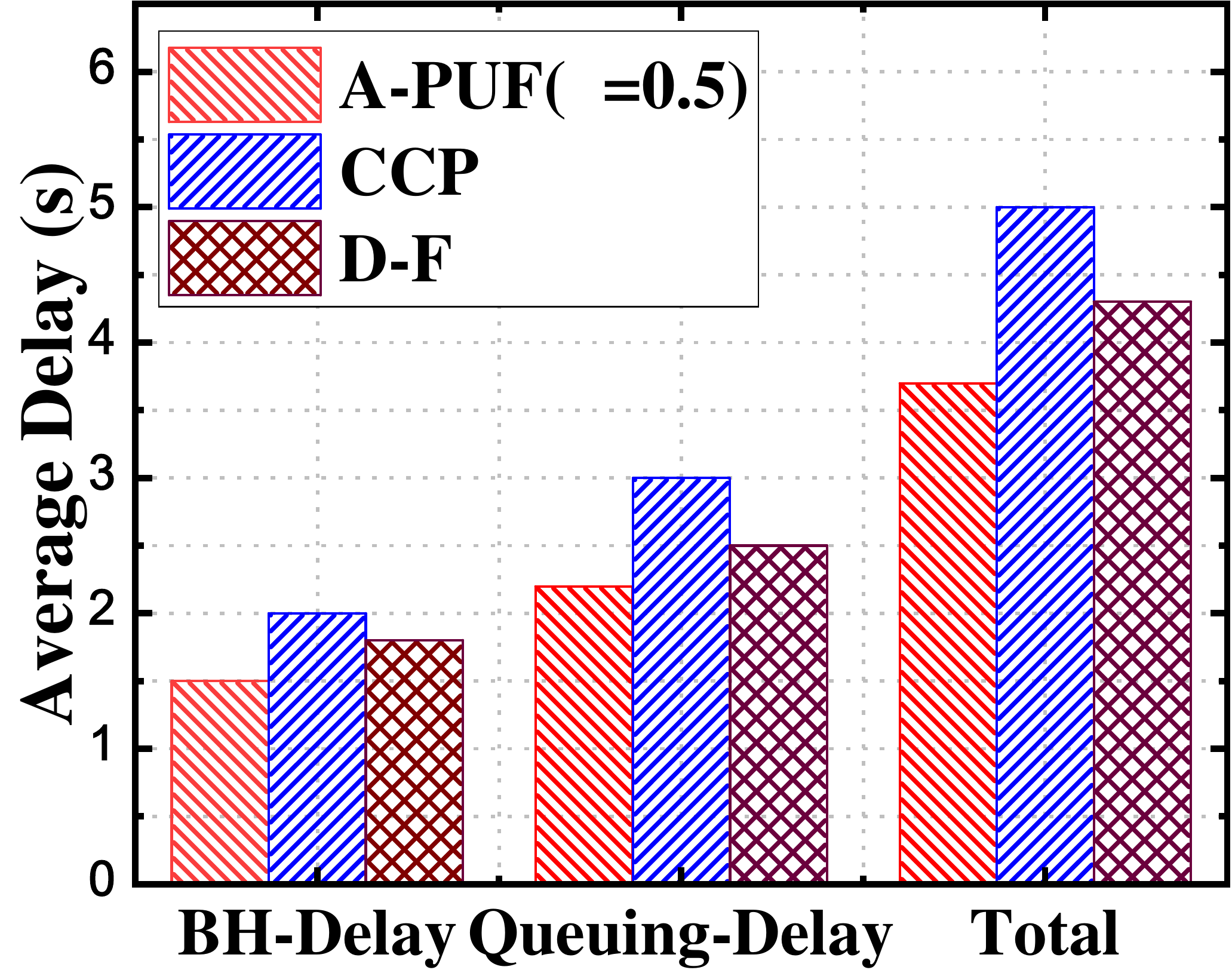}\label{Two_Delay}}                                        
\captionsetup{font={small}}
\caption{(a) Total power consumption and (b)  average delay comparison under different number of SBS
\\(c) specific power consumption and (d)  specific average delay comparison}
\label{PowerDelayBS}
\end{figure}
In the simulations, we will test our \textbf{A-PUF} algorithm with different $\theta$ values: 0, 0.5 and 1. \textbf{A-PUF($\theta$=0)} actually focuses on optimizing user delay. \textbf{A-PUF($\theta$=1)} intends to minimize the  total power consumption  called. To achieve both energy and delay minimization, $\theta$ is set to 0.5, namely \textbf{A-PUF($\theta$=0.5)}

In Fig. \ref{PowerBS} and Fig. \ref{DelayBS}, as expected, compared with other two algorithms, A-PUF($\theta$=1) and A-PUF($\theta$=0) achieve the least total power consumption and the minimum user delay, respectively. A-PUF($\theta$=0.5) can balance two objections. The proposed CCP algorithm consumes less energy than D-F but more file delivery delay.

The reason is that A-PUF($\theta$=0.5) makes full use of power control, causing more user association selection and more flexible file placement. When  transmit power is controlled among BSs, there are more  user association selection. Accordingly, the file placement at each SBS becomes more flexible. This is consistent with our research motivation in   Sec.I: power control, user association  and file placement  are coupled on power and delay optimization. In contrast, D-F results in lower file delivery delay than CCP. This is because that D-F focuses on delay minimization, which sacrifices  total power consumption. CCP results in lower total power consumption than D-F. The reason is that CCP focuses on optimizing backhaul delay and transmit power consumption with file placement and power control. However in CCP fixed user association and static file placement incurs more power consumption and user delay than A-PUF($\theta$=0.5).

Specifically, when the number of SBS is 50 and the cache capacity is 5000MB, the power consumption and average delay of different parts are shown in Fig. \ref{Three_Power} and \ref{Two_Delay}, respectively:
\begin{itemize}
  \item TX: denotes  transmit power consumption of DSCN
  \item CA: denotes caching power consumption
  \item BH: denotes  bakchaul power consumption of DSCN
  \item BH-Delay: denotes  average user backhaul delay
  \item Wireless-Delay: denotes average user wireless transmission delay
\end{itemize}
In Fig. \ref{Three_Power}, A-PUF($\theta$=0.5) consumes the least transmit power and D-F consumes the most power. This is because that power control in A-PUF can save much transmit power. An interesting observation is that all three strategies cache as many files as possible so that the caching power consumptions are the same, which indicates that caching more files can efficiently improve network performance. Besides, A-PUF($\theta$=0.5) consumes the least backhaul power. As power control is jointed with user association, A-PUF($\theta$=0.5) owns more user association secletion than CCP and  D-F, which causes more flexible file placement. Therefore, the backhaul power consumption is largely saved. Correspondingly, the average backhaul delay in Fig. \ref{Two_Delay} is least. Further, power control and user association can improve the transmission rate and the wireless transmission delay in A-PUF($\theta$=0.5) is significantly decreased.

\subsection{Performance Under Different Cache Capacities}
\begin{figure}[ht]
\centering                                            
\subfigure[]{\includegraphics[width=7.23cm]{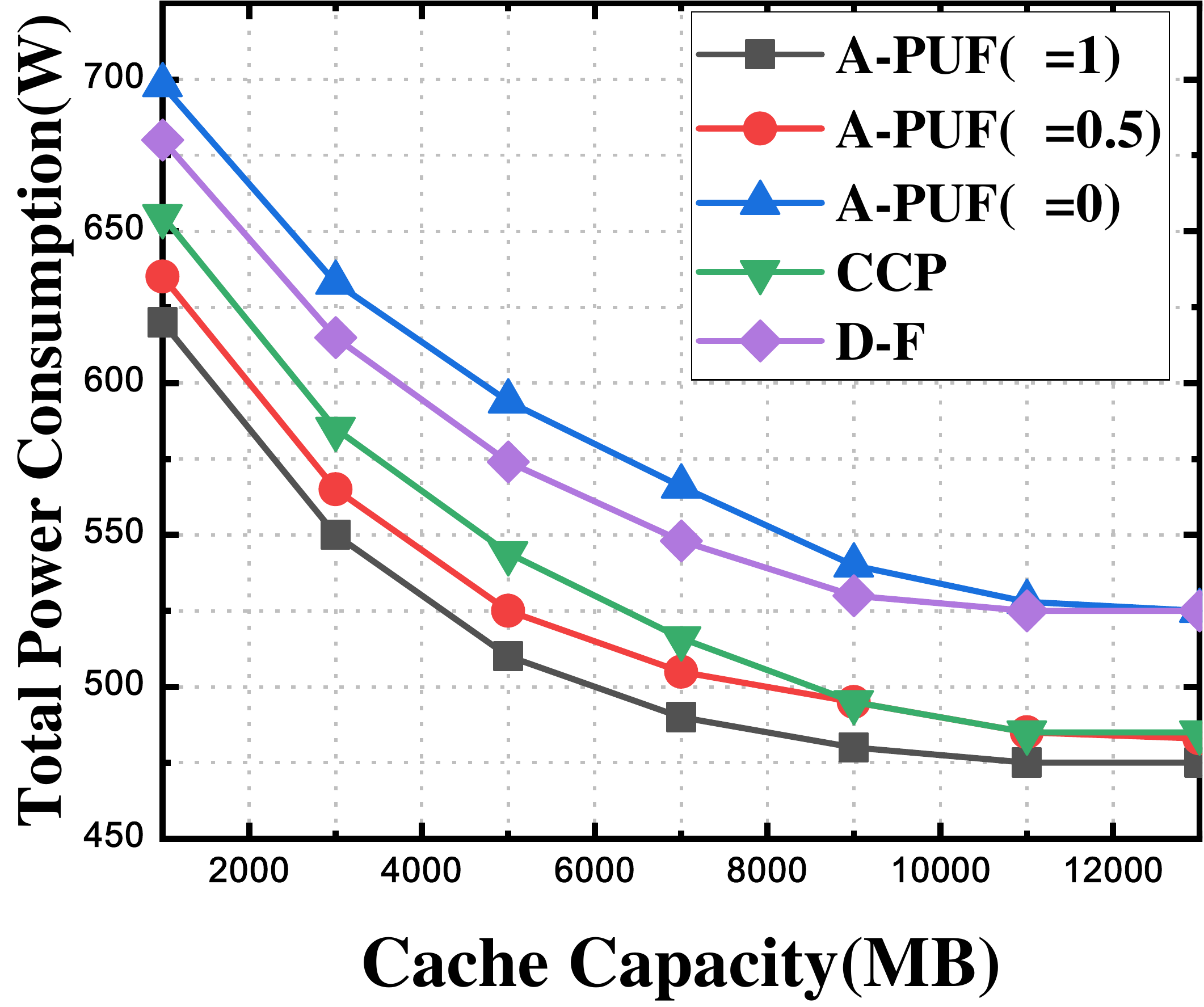}\label{PowerCache}}
\subfigure[]{\includegraphics[width=7.45cm]{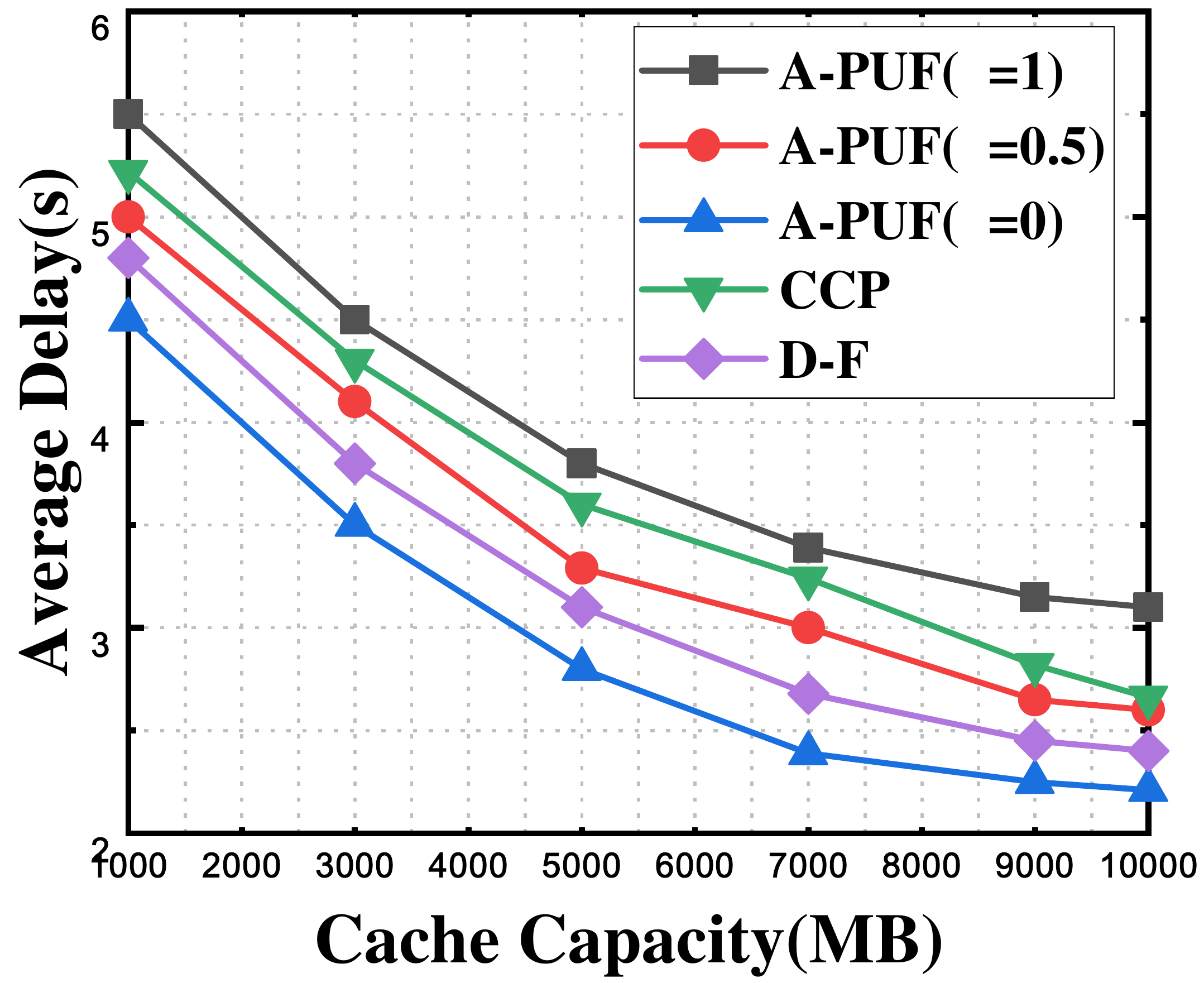}\label{DelayCache}}                                        
\captionsetup{font={small}}
\caption{(a) power consumption and (b) file delivery delay comparison under different average cache capacities}
\label{PowerDelayCache}
\end{figure}
 In Fig. \ref{PowerDelayCache}, the performance of CCF, D-F and A-PUF algorithms under different cache capacities is shown. As cache capacity increases, both  power consumption and  delay of all algorithms decrease. This is because when cache capacity increases, users can get more desired files from nearer SBSs directly and more delay(including backhaul delay and transmission delay) and  power consumption are saved.

 By observing A-PUF($\theta$=0) and D-F, we can see that A-PUF($\theta$=0) can incur less delay but more  power consumption than D-F. This reason is that, in A-PUF($\theta$=0) transmit power is controlled to improve the wireless datarate so that more power is consumed and less wireless transmission delay is obtained. After comparing A-PUF($\theta$=0.5) and CCP, we can see A-PUF($\theta$=0.5) outperforms CCP in both delay and power consumption. That indicates joint power control, user association and file placement in A-PUF($\theta$=0.5) make users obtain as many files as possible from nearer SBSs that cache requested files. As a result, power consumption (e.g.transmission and backhaul) and  backhaul delay are reduced.

\subsection{Impact of File Placement Policies}
\begin{figure}[ht]
\setlength{\abovecaptionskip}{0.cm}
\setlength{\belowcaptionskip}{-0.cm}
\centering                                         
\subfigure[]{\includegraphics[width=7.23cm]{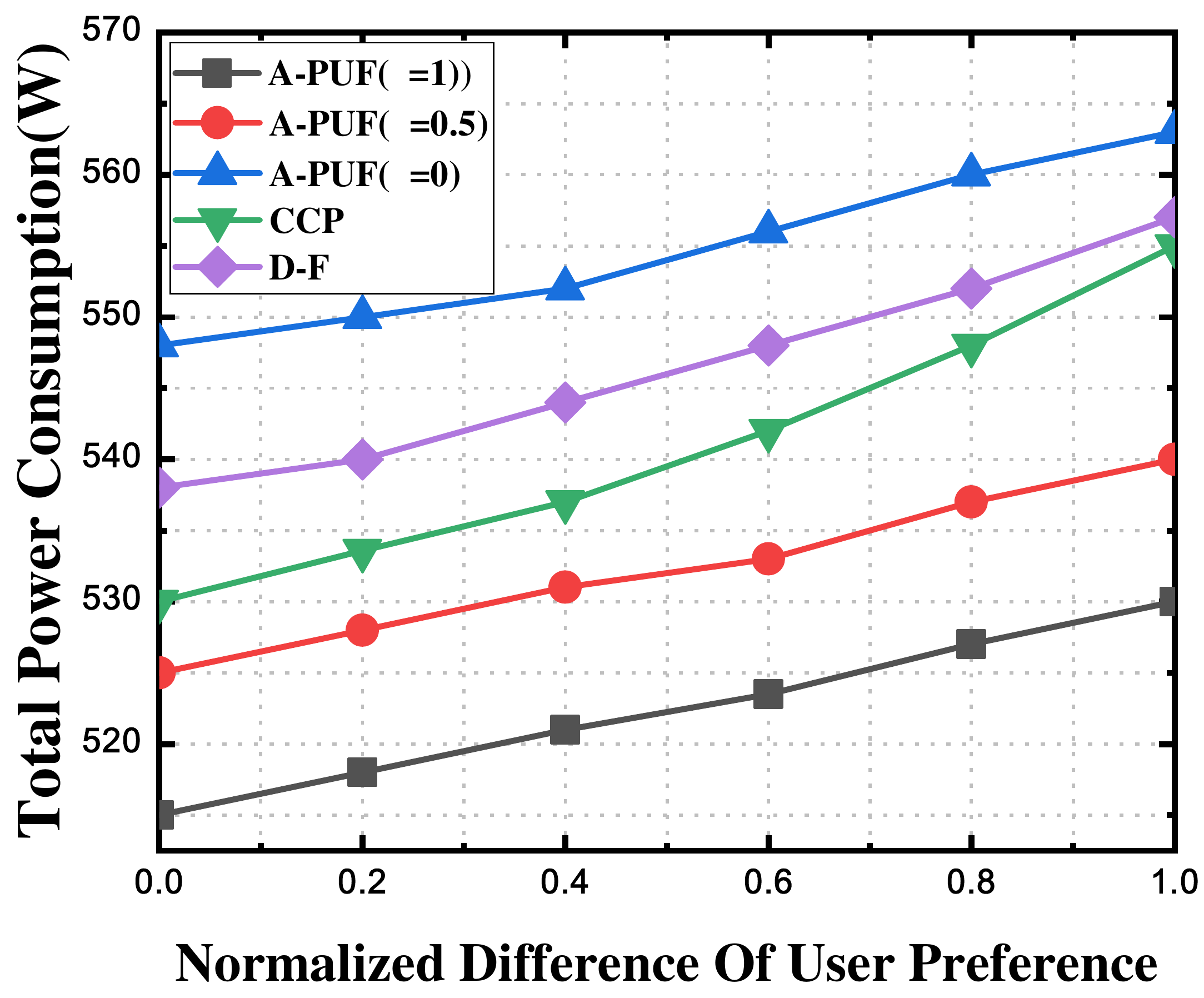}\label{PowerUserP}}
\subfigure[]{\includegraphics[width=7.4cm]{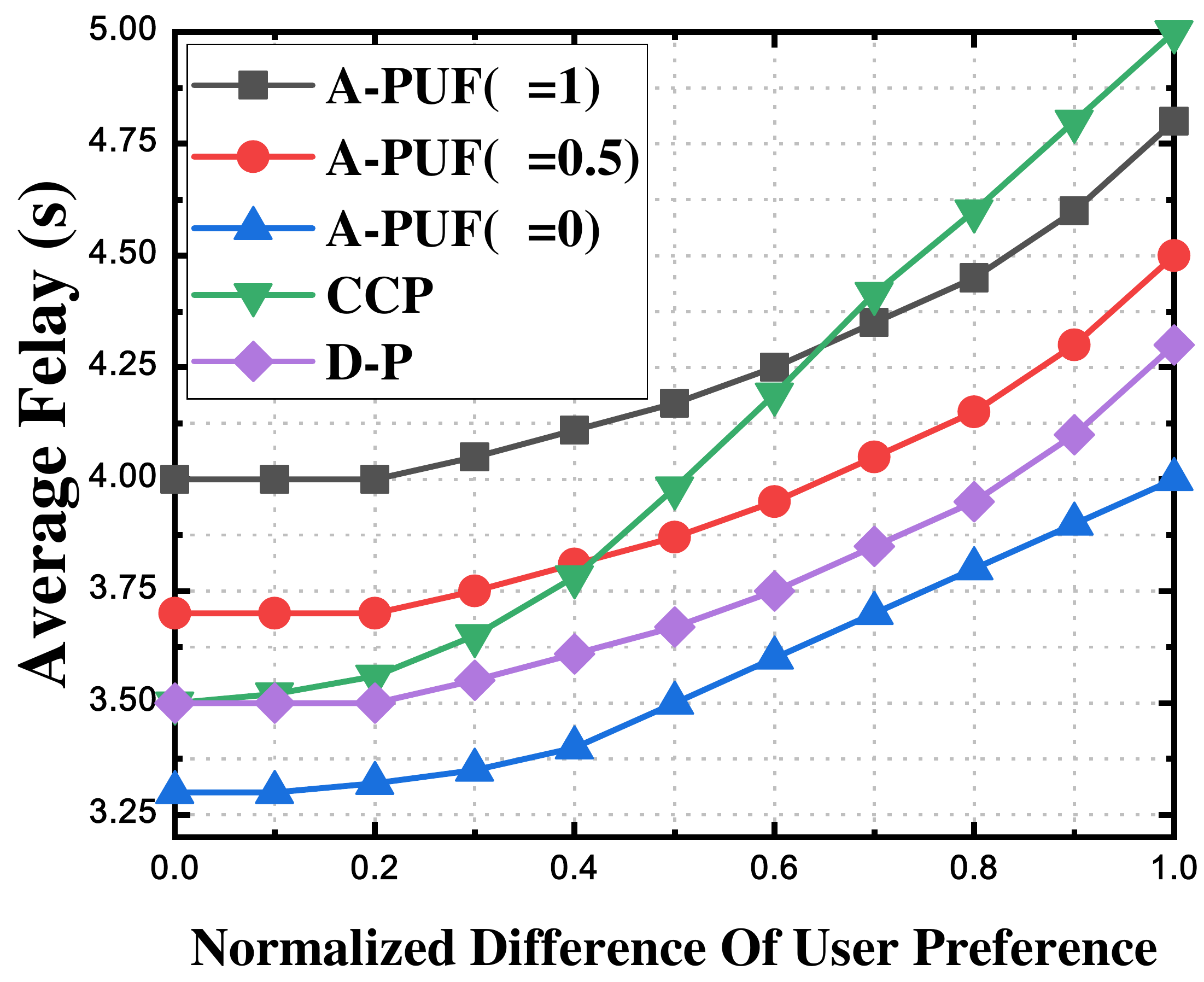}\label{DelayPre}}                                        
\captionsetup{font={small}}
\caption{Total power consumption and average delay comparison of caching strategies under different user preference}
\label{UserPreference}
\end{figure}

To verify the impact of file placement policy, we compare   performance of A-PUF, CCP and D-F under different user preference. We introduce  a parameter $Q$ to indicate difference of user preference.  First, $Q_k=\frac{\sum_{i\in\mathcal{U}}(q_{ik}-\overline{q_{k}})^{2}}{U}$ of file $f_k$  and   $Q=\sum{Q_k}$ are calculated where $\overline{q_{k}}=\frac{\sum_{i}q_{ik}}{U}$. Larger $Q$ means that the user preference are more different from each other. We normalize $Q$ and set five values from $0$ to $1$.

In Fig.\ref{UserPreference}, as normalized $Q$ increases, the power consumptions and delay of all algorithms slowly increase except CCP, which however increases rapidly. That is because, for CCP, SBSs cache the same files, ignoring  different user preference and incurring lower file hit ratio. When user preference are rather different from each other, files will be delivered by bakchaul which results longer delay and backhaul power consumption. However, thanks to the user preference-based file placement, A-PUF algorithm can maintain a steady performance(e.g. both delay and power consumption) gain in spite of the dynamic user preference.

\subsection{Delay and Power Consumption Tradeoff}

\begin{figure}[ht]

\centering                                            
\subfigure[]{\includegraphics[width=7.4cm]{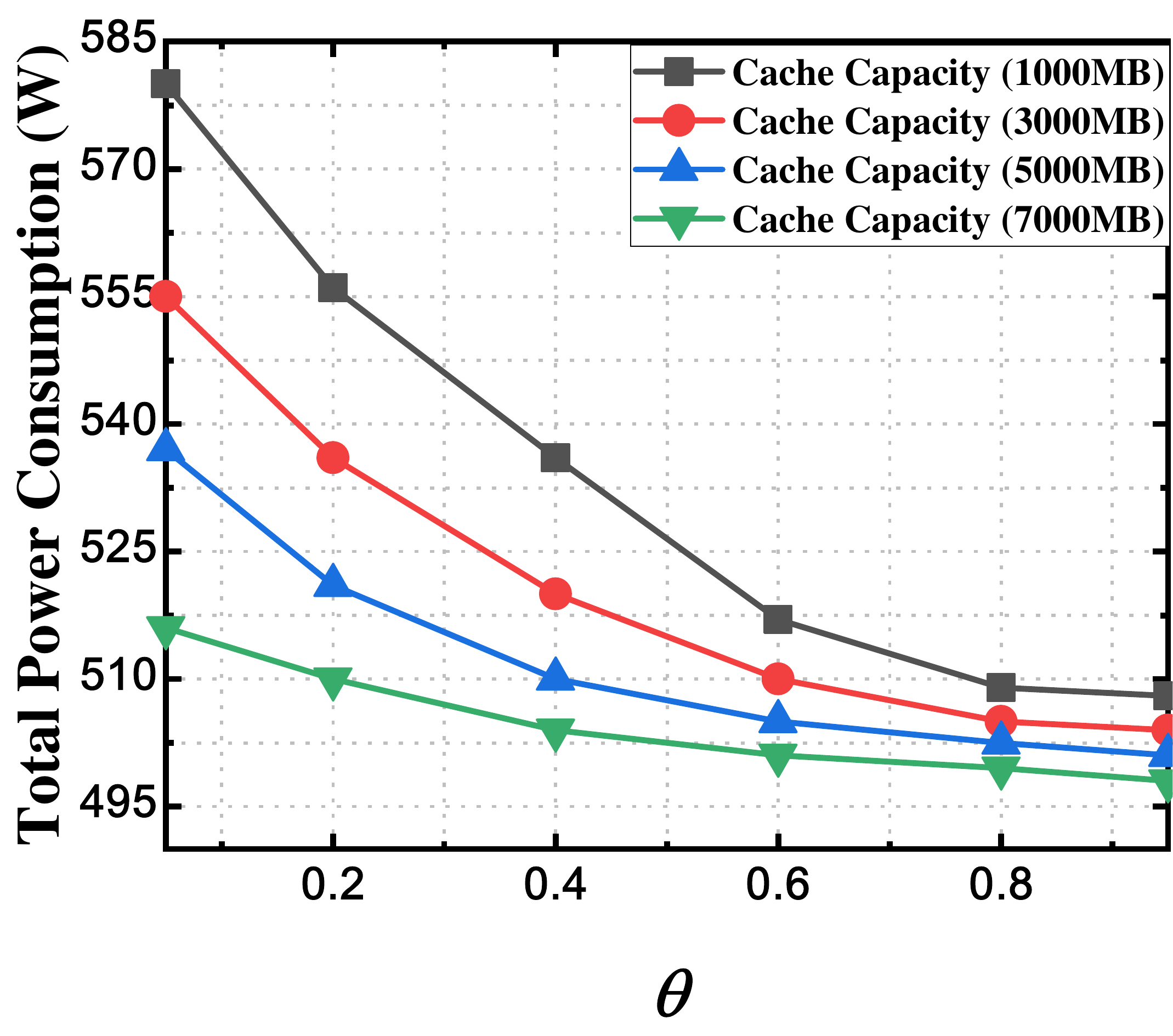}\label{TradePower}}
\subfigure[]{\includegraphics[width=7cm]{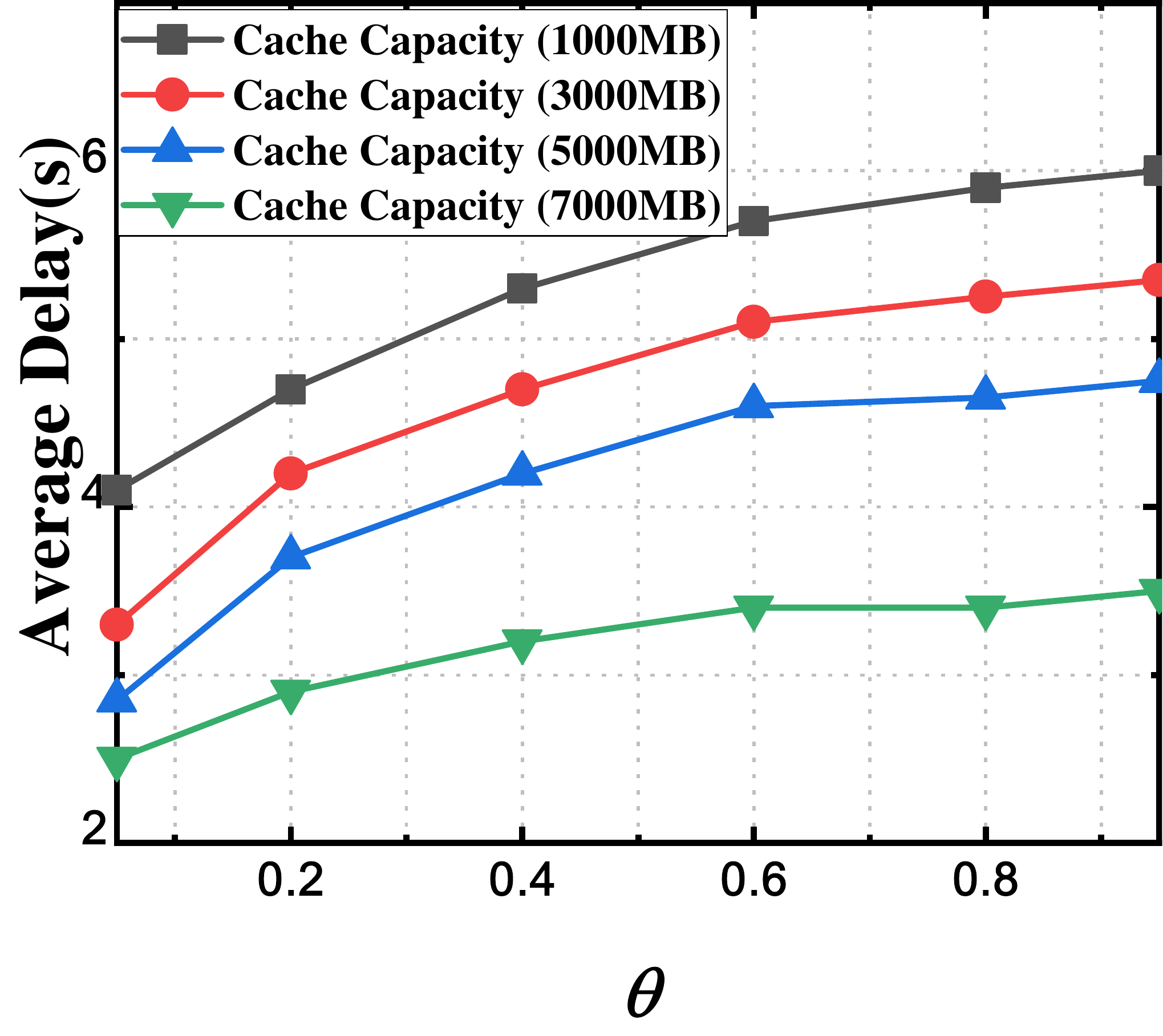}\label{TradeoffDelay} }                                        
\captionsetup{font={small}}
\caption{(a) Total power consumption and (b) average delay comparison under different balancing factor $\theta$}
\label{CachePDComparision}
\end{figure}
To verify the coupling relationship between file delivery delay and power consumption, we vary balancing factor $\theta$ under different cache capacities.  Fig. \ref{CachePDComparision} shows the power-delay tradeoff curves by adjusting  $\theta$ from $0.05$ to $0.95$ where the number of SBS is set to 50. As $\theta$ increases, power consumption is in a decreasing trend and file delivery delay in a increasing trend. Such opposite trends indicate that a desired tradeoff between file delivery delay and power consumption  can be achieved by adjusting $\theta$ to a specific value. For example, as shown in Fig. \ref{CachePDComparision}, in order to improve user experience, file delivery delay can be reduced by average $15\%$ by setting $\theta$ from 0 to 0.4. In this case, power consumption is increased by average $25\%$.

\section{conclusion}
In this paper, we solve the JDPO problem by jointly performing power control, user association and file placement. To reduce the complexity of the JDPO problem, we convert it to a form that can be handled by GBD and then decomposed the converted problem into two smaller problems, i.e., primal problem related to power control and master problem related to user association and file placement. According to the GBD approach, for the converted problem, the primal problem provides an upper bound while the master problem provides a lower bound. Based on this fact, we propose an iterative algorithm to approach the optimal solution, by solving the primal problem and the master problem iteratively. The proposed iterative algorithm is proved to be $\epsilon$-optimal. To further reduce the complexity of the master problem, an accelerated algorithm based on SDR is proposed. The simulation results show that the proposed algorithm can approach the optimal tradoff between file delivery delay and power consumption.



%

\end{document}